\newcommand*{\stackrelbelow}[2]{ \mathrel{\mathop{#2}\limits_{#1}  }}
\newcommand{\bzer}{\ensuremath{\mathbf{0}}}
\newcommand{\bA}{\ensuremath{\mathbf{A}}}
\newcommand{\bB}{\ensuremath{\mathbf{B}}}
\newcommand{\bC}{\ensuremath{\mathbf{C}}}
\newcommand{\bD}{\ensuremath{\mathbf{D}}}
\newcommand{\bE}{\ensuremath{\mathbf{E}}}
\newcommand{\bF}{\ensuremath{\mathbf{F}}}
\newcommand{\bG}{\ensuremath{\mathbf{G}}}
\newcommand{\bH}{\ensuremath{\mathbf{H}}}
\newcommand{\bI}{\ensuremath{\mathbf{I}}}
\newcommand{\bM}{\ensuremath{\mathbf{M}}}
\newcommand{\bN}{\ensuremath{\mathbf{N}}}
\newcommand{\bP}{\ensuremath{\mathbf{P}}}
\newcommand{\bQ}{\ensuremath{\mathbf{Q}}}
\newcommand{\bR}{\ensuremath{\mathbf{R}}}
\newcommand{\bU}{\ensuremath{\mathbf{U}}}
\newcommand{\bV}{\ensuremath{\mathbf{V}}}
\newcommand{\bW}{\ensuremath{\mathbf{W}}}
\newcommand{\bX}{\ensuremath{\mathbf{X}}}
\newcommand{\bY}{\ensuremath{\mathbf{Y}}}
\newcommand{\bZ}{\ensuremath{\mathbf{Z}}}
\newcommand{\bLamb}{\ensuremath{\mathbf{\Lambda}}}
\newcommand{\bb}{\ensuremath{\mathbf{b}}}
\newcommand{\bc}{\ensuremath{\mathbf{c}}}
\newcommand{\be}{\ensuremath{\mathbf{e}}}
\newcommand{\bff}{\ensuremath{\mathbf{f}}}
\newcommand{\bg}{\ensuremath{\mathbf{g}}}
\newcommand{\bm}{\ensuremath{\mathbf{m}}}
\newcommand{\bn}{\ensuremath{\mathbf{n}}}
\newcommand{\bp}{\ensuremath{\mathbf{p}}}
\newcommand{\bs}{\ensuremath{\mathbf{s}}}
\newcommand{\bu}{\ensuremath{\mathbf{u}}}
\newcommand{\bv}{\ensuremath{\mathbf{v}}}
\newcommand{\by}{\ensuremath{\mathbf{y}}}
\newcommand{\bz}{\ensuremath{\mathbf{z}}}
\newcommand{\bfx}{\ensuremath{\mathbf x}}
\newcommand{\bfA}{\ensuremath{\mathbf A}}
\newcommand{\bfF}{\ensuremath{\mathbf F}}
\newcommand{\bfI}{\ensuremath{\mathbf I}}
\newcommand{\bfb}{\ensuremath{\mathbf b}}
\newcommand{\bfB}{\ensuremath{\mathbf B}}
\newcommand{\bfc}{{\mathbf c}}
\newcommand{\bfC}{{\mathbf C}}
\newcommand{\bfM}{{\mathbf M}}
\newcommand{\bfP}{{\mathbf P}}
\newcommand{\bfV}{{\mathbf V}}
\newcommand{\bfW}{{\mathbf W}}
\newcommand{\bfg}{{\mathbf g}}
\newcommand{\bfG}{{\mathbf G}}
\newcommand{\bfH}{{\mathbf H}}
\newcommand{\bfu}{{\mathbf u}}
\newcommand{\bfy}{{\mathbf y}}
\newcommand{\real}{{\mathsf{Re}}}
\newcommand{\Tr}{{\mathsf{tr}}}
\newcommand{\res}{{\mathsf{res}}}
\newcommand{\Ran}{{\mathsf{Ran}}}
\newcommand{\spn}{{\mathrm{span}}}
\newcommand{\cbfA}{\mbox{\boldmath${\EuScript{A}}$} }
\newcommand{\cbfZ}{\mbox{\boldmath${\EuScript{Z}}$} }
\newcommand{\cbfB}{\mbox{\boldmath${\EuScript{B}}$} }
\newcommand{\cbfC}{\mbox{\boldmath${\EuScript{C}}$} }
\newcommand{\cHtwo}{ { {\mathcal H}_2} }
\newcommand{\cHtwoW}{ {{\mathcal H}_2\mbox{\tiny $(W)$}} }
\newcommand{\cHtwoWmed}{ {{\mathcal H}_2\mbox{\small $(W)$}} }
\newcommand{\cHinf}{ { {\mathcal H}_\infty} }
\newcommand{\cHinfW}{ { {\mathcal H}_\infty\mbox{\tiny $(W)$}} }
\newcommand\IR{ {\mathbb R}}
\newcommand\IC{ {\mathbb C}}
\newtheorem{theorem}{Theorem}
\newtheorem{lemma}[theorem]{Lemma}
\newtheorem{proposition}[theorem]{Proposition}
\newtheorem{corollary}[theorem]{Corollary}
\newenvironment{proof}[1][Proof]{\begin{trivlist}
\item[\hskip \labelsep {\bfseries #1}]}{\end{trivlist}}
\newcounter{subeqnSave}  
\newcounter{parenteqnSave}
\author[kfu]{Tobias Breiten\corref{cor1} \fnref{fn1}}
\ead{tobias.breiten@uni-graz.at}
\author[vt]{Christopher Beattie \fnref{fn2}}
\ead{beattie@vt.edu}
\author[vt]{Serkan Gugercin \fnref{fn2}}
\ead{gugercin@vt.edu}
\address[kfu]{Institute of Mathematics and Scientific Computing, University
of Graz, Heinrichstr. 36, A-8010 Graz, Austria}
\address[vt]{Department of Mathematics, Virginia Tech, Blacksburg, VA
24061-0123, USA}
\begin{document}

\begin{frontmatter}

\title{Near-optimal frequency-weighted interpolatory model reduction} 



\begin{keyword}                           
frequency-weighting, interpolation, controller reduction, ${\mathcal H}_2$ model
reduction.              
\end{keyword}                             

\begin{abstract}                          
This paper develops an interpolatory framework for weighted-$\cHtwo$ model
reduction of MIMO dynamical systems. 
 A new representation of the weighted-$\cHtwo$ inner products in MIMO settings is
introduced and used to derive associated first-order necessary conditions satisfied by optimal weighted-$\cHtwo$ reduced-order models.  Equivalence of these new interpolatory conditions with earlier Riccati-based conditions given by Halevi is also shown.   
An examination of realizations for equivalent weighted-$\cHtwo$ systems leads then to an algorithm that remains tractable for large state-space dimension.  Several numerical examples
illustrate the effectiveness of this approach and its competitiveness with
Frequency Weighted Balanced Truncation and an earlier interpolatory approach, the Weighted Iterative Rational Krylov Algorithm. 
\end{abstract}

\end{frontmatter}

\section{Introduction} 
\label{sec:intro}

Consider a multiple input/multiple output (MIMO) linear dynamical system having
a  state-space realization (which will be presumed minimal) given by
\begin{equation}
\label{ltisystemintro}
\begin{array}{l}
 \dot{\bfx}(t)  =  \bfA \bfx(t) + \bfB\,\bu(t)\\
\quad \by(t)  = \bfC \bfx(t) + \bD\,\bu(t)
\end{array}
\end{equation}
where $\bfA\in {\mathbb R}^{n\times n}$, $\bfB\in {\mathbb
R}^{n\times m}$,  $\bfC\in {\mathbb R}^{p\times n}$, and 
$\bD\in {\mathbb R}^{p\times m}$  are 
constant matrices.   
 $\bfx(t) \in {\mathbb R}^n$, $\bfu(t) \in {\mathbb R}^m$
 and $\bfy(t) \in {\mathbb R}^p$ are, respectively, 
 the \emph{state},  the \emph{input},  and  the \emph{output} of the system.  
 The \emph{transfer function} of this system is $\bfG(s) =
\bfC(s\bfI-\bfA)^{-1}\bfB + \bD$. 
Following common usage, the underlying system will also be denoted by $\bfG$.
The circumstances of interest for us presume very large state-space dimensions
relative to the input/output dimensions, $n\gg m,p$. 
This leads to fundamental difficulties for any task that involves optimization
or control of this system.  This in turn motivates
\emph{model reduction}:  finding a reduced order model (ROM),
\begin{equation} \label{redsysintro}
\begin{array}{l}
\dot{\bfx}_r (t)  =  \bfA_r \bfx_r (t) + \bfB_r \bfu(t),\\
\quad \bfy_r(t)   =   \bfC_r \bfx_r (t)  + \bD_r\,\bu(t)
\end{array}
\end{equation}
with an associated transfer function $\bfG_r(s) =
\bfC_{r}(s\bfI-\bfA_{r})^{-1}\bfB_{r} + \bD_r$
where $\bfA_r \in \IR^{n_r \times n_r}$, $\bfB_r\in {\mathbb
R}^{n_r\times m}$,   $\bfC_r\in {\mathbb R}^{p\times n_r}$, and
$\bD_r\in {\mathbb R}^{p\times m}$.
The goal is to produce  a greatly reduced state-space dimension, $n_r \ll n$,  yet still
assure that $\bfy_{r}(t)\approx \bfy(t)$ over a large class of inputs $\bfu(t)$.
 This is accomplished by requiring $\bfG_r(s)$ to approximate $\bfG(s)$ very well, in an appropriate sense, 
 which we interpret as 
making $\bfG_r(s)-\bfG(s)$ small with respect to an appropriate system norm. 

For example, one may consider 
approximations that attempt to 
minimize either the $\cHtwo$-error: {\small 
\begin{equation*} 
 \| \bfG-\bfG_r \|_{\cHtwo} \stackrel{\tiny{\mbox{def}}}{=} \left(\frac{1}{2\pi}
\int_{-\infty}^{+\infty}
 \| \bfG(\imath \omega)-\bfG_r(\imath \omega) \|_{F}^2
\,\mathrm{d}\omega\right)^{1/2},
\end{equation*}}
or the $\cHinf$-error: {\small 
\begin{equation*} 
 \| \bfG-\bfG_r \|_{\cHinf} \stackrel{\tiny{\mbox{def}}}{=} \sup_{\omega\in\IR}
 \| \bfG(\imath \omega)-\bfG_r(\imath \omega) \|_{2}.
\end{equation*}}
Here  $\|\bfM\|_F^2=\sum_{i,j}|m_{ij}|^2$ denotes the \emph{Frobenius norm} and $\|\bfM\|_2$ denotes the \emph{spectral norm}
of the matrix $\bfM$. Notice that to ensure that the first error measure is
even \emph{finite},  it is necessary that $\bD_r=\bD$. 

``Typical" inputs, $\bfu(t)$, often will have their power concentrated in known
frequency ranges, and so, some frequency ranges will naturally be more important
than others with regard to ROM fidelity.   This leads in a natural way to
consideration of \emph{weighted}  system errors designed in such a way so as to enhance accuracy in certain frequency ranges while permitting larger errors at other frequencies, and towards that end we 
consider, \emph{weighted} measures of system error such as
\begin{equation*} 
\begin{aligned}
& \| \bfG_r - \bfG \|_{\cHtwoW} \stackrel{\tiny{\mbox{def}}}{=} \| \left(
\bfG_r(s) - \bfG(s) \right)
\bfW(s) \|_{\cHtwo} \\
\mbox{and} & \\
 & \| \bfG_r - \bfG \|_{\cHinfW} \stackrel{\tiny{\mbox{def}}}{=} \| \left(
\bfG_r(s) - \bfG(s) \right)
\bfW(s) \|_{\cHinf}
\end{aligned}
\end{equation*}
{\flushleft
where $\bfW(s)$} is a given input weighting (a ``shaping filter").  
One may specify an output weighting as well, however in the interest of clarity and brevity,
we do not do this here.   We focus on weighted-$\cHtwo$ measures of error so that
for a given system, $\bfG \in \mathbb{\mathcal H}_2$, one seeks a
reduced system 
$\bfG_r \in \mathbb{\mathcal H}_2$ solving: 
{ 
\begin{equation} \label{eq:H2optProb}
\bfG_r =\arg\hspace{-3ex}\min_{\mbox{\tiny $\mathsf{ord}(\tilde{\bfG})\leq
n_r$}} \|\bfG-\tilde{\bfG}\|_{\cHtwoW} 
\end{equation}
}

A variety of shaping filters can be considered.  For example, 
if $\bfW(s)$ were to be chosen to be a transfer function associated with a band-pass filter then
approximation errors at frequencies within the passband would be penalized, while 
approximation error at frequencies lying outside the passband would be discounted. 

Another choice of shaping filter arises from controller reduction:
Consider a linear dynamical system,  $\mathbf{P}$ (the \emph{plant}), with order $n_P$ together with an associated stabilizing controller,  $\bfG$, having order $n$,  
that is connected to $\bP$ in a feedback loop.
Many control design  methodologies,
such as LQG and $\cHinf$ methods, lead ultimately to
controllers whose order is generically as high as the order of the plant, $n\approx n_P$,
see \cite{varga2003aem,ZhDG96} and references therein.  Thus, high-order plants will generally
lead to high-order controllers.  However,
high-order controllers are usually undesirable in real-time applications because this typically translates into unduly complex and costly hardware implementation that may suffer degraded performance both in terms of speed and accuracy. 
Thus, one may prefer to replace $\bfG$ with a reduced order controller,
$\bfG_r$, having order $n_r \ll n$. 

It is often not enough to simply require $\bfG_r$ to be a good approximation to $\bfG$.
In order to accurately recover closed-loop performance, plant dynamics need to be taken into account during the reduction process. This may be achieved through frequency weighting:  Given a  stabilizing controller $\bfG$, if a reduced model,
$\bfG_r$,  has the same number of unstable poles as $\bfG$ and
$$
\left\| [\bfG-\bfG_r]\,\cdot\,\bfP[\bfI + \bfP\bfG]^{-1} \right\|_\cHinf < 1,
$$ 
then,  if $\bfG_r$ is used to replace $\bfG$, 
$\bfG_r$ will also be a stabilizing controller \cite{anderson2002crc,ZhDG96}.  
Seeking $\bfG_r$ to minimize a weighted measure of $\cHtwo$ error as in (\ref{eq:H2optProb}) is an effective proxy, using the weight
$\bfW(s)=\bfP(s)[\bfI + \bfP(s)\bfG(s)]^{-1}$.   This approach has been considered in 
\cite{varga2003aem,anderson2002crc,schelfhout2002ncl,gugercin2004smr,Enn84,wang2002bpp,lin1992mrv,wang1999nfw,sreeram2005fwm} and references therein, leading then to variants of frequency-weighted balanced truncation. 
Related methods in \cite{Hal92,Petersson201432,spanos92} 
are tailored instead towards minimizing a similarly weighted $\mathcal{H}_2$ error, as we do here. 

The main contributions of this paper are threefold. First, we develop a new analysis framework through the introduction of a linear mapping from $\cHtwoWmed$ to $\cHtwo$ that gives a new representation of the weighted-$\cHtwo$ inner product for MIMO systems. This representation allows us to rewrite the weighted-$\cHtwo$ inner product as a regular (unweighted) $\cHtwo$ inner product and leads to interpolatory first-order necessary conditions for optimal weighted-$\cHtwo$ approximation. This  analysis framework allows us to extend the
interpolatory conditions of \cite{Ani13} for the SISO weighted-$\cHtwo$ problem to the MIMO case, and more generally allows us greater flexibility in treating more general settings that involve non-trivial feedthrough terms, which play a crucial role in the weighted-$\cHtwo$ problem.  Second, we show that this new interpolation framework is equivalent to the Riccati-based formulation of Halevi \cite{Hal92}, thus assuring the accuracy of the Riccati-based optimality formulation at a much lower cost.   Finally,
via a detailed examination and a new state-space 
realization for equivalent weighted-$\cHtwo$ systems, we propose a numerical algorithm for weighted-$\cHtwo$ approximation that
remains tractable for large state-space dimension. Unlike the heuristic algorithm introduced in \cite{Ani13}, which is inspired by optimality conditions but does not attempt to satisfy them,  the algorithm proposed here is ``near optimal" in the sense that it directly approximates the weighted optimality conditions and approaches true optimality as reduction order grows.

The rest of the paper is organized as follows: In Section \ref{sec:h2inner}, we introduce the new formulation for the weighted-$\cHtwo$ inner product for MIMO systems based on a bounded linear transformation from $\cHtwoWmed$ to $\cHtwo$ with which we derive interpolatory optimality conditions. The equivalence of these conditions to those of Halevi \cite{Hal92} is proved in Section \ref{sec:halevi} 
followed in Section \ref{sec:nowi} by a description of a numerical algorithm 
 for optimal weighted-$\cHtwo$ approximation based on these conditions. Several numerical examples are given in Section \ref{sec:num};  a summary and conclusions are offered in Section \ref{sec:conc}.

\section{Optimal approximations in a weighted-$\cHtwo$ norm.}  \label{sec:h2inner}
 $\cHinf$ denotes here the set of $m\times m_w$ matrix-valued functions, $\bfW(s)$, having entries, $w_{ij}(s)$, that are analytic for $s$ in the open right half plane and uniformly bounded along the imaginary axis: $\sup_{\omega\in\IR}|w_{ij}(\imath \omega)|$ is finite for all $i,j$. 
A norm may be defined on $\cHinf$ as  $\|\bfW\|_{\cHinf}=\sup_{\omega\in\IR}
\|\bfW(\imath \omega)\|_2$, where $\|\bfM\|_2$ here represents the induced
matrix $2$-norm.  We assume throughout that the weighting functions,  $\bfW(s)$, are
drawn from $\cHinf$. 

For any such weight, $\bfW\in\cHinf$, denote by $\cHtwoWmed$ the set of $p\times m$ matrix-valued functions, $\bfG(s)$, that have components analytic for $s$ in the open right half plane, and such that for each fixed $Re(s)=x>0$, $\bfG(x+\imath y)$ 
is square integrable with respect to $\bfW$ as a function of $y\in(-\infty,\infty)$ in the sense that
$$
\sup_{x>0}\int_{-\infty}^{\infty}\|\bfG(x+\imath y)\bfW(x+\imath y)\|_F^{2}\ dy < \infty.
$$ 
If $\bfG,\, \bfH\in \cHtwoWmed$ are transfer functions representing real 
dynamical systems then an inner product may be defined as
{\small 
\begin{align*} 
&\left\langle \bfG,\ \bfH \right\rangle_{\cHtwoW} \\
&\quad = \frac{1}{2\pi} \int_{-\infty}^{\infty}
\Tr\left( \overline{\bfG(\imath \omega)\bfW(\imath\omega)}
\bfW(\imath\omega)^T\bfH(\imath \omega)^T \right) \, \mathrm{d}\omega \nonumber
\\
&\quad= \frac{1}{2\pi} \int_{-\infty}^{\infty}
\Tr\left( \bfG(-\imath \omega)\bfW(-\imath\omega)\bfW(\imath\omega)^T
\bfH(\imath \omega)^T \right)\, \mathrm{d}\omega.  
\end{align*}
}The associated norm on $\cHtwoWmed$ is 
$$
\|\bfG\|_{\cHtwoW}
=\left( \left\langle \bfG,\ \bfG \right\rangle_{\cHtwoW} \right)^{1/2}.$$

$\cHtwo$ will denote precisely the set $\cHtwoWmed$ with the particular choice $\bfW(s)=\bfI$ (so that$m=m_w$).  
Note that  $\cHtwo\subset \cHtwoWmed$ and for $\bfG,\, \bfH\in \cHtwo$, 
\begin{equation} \label{eq:absH2winner}
\left| \left\langle \bfG,\ \bfH \right\rangle_{\cHtwoW}\right|
\leq  \|\bfW\|_{\cHinf}^2 \  \| \bfG \|_{\cHtwo}\ \| \bfH \|_{\cHtwo}.
\end{equation}

In all that follows, we suppose the weight $\bfW\in\cHinf$ is a rational function with
simple poles at 
 $\{\gamma_{1},\dots,\gamma_{n_w}\}$ and  that it has alternative
representations given by
 \begin{align}
\bfW(s)=\mathbf{C}_w\left(s\mathbf{I}-
\mathbf{A}_w\right)^{-1}\mathbf{B}_w+\bD_w \label{WssRlzn}\\
 \mbox{and}\quad \bfW(s) =\sum_{k=1}^{n_w} \frac{\be_k\, \bff_k^T}{s-\gamma_k} +
\bD_w. \label{WPoleRes}
 \end{align}
 with  $\bfA_w \in \IR^{n_w \times n_w}$, $\bfB_w\in {\mathbb R}^{n_w\times
m_w}$,  $\bfC_w\in {\mathbb R}^{m\times n_w}$, 
 and   $\bD_w\in {\mathbb R}^{m\times m_w}$.   Echoing the setting of 
\cite{Hal92}, our analysis does not require $m=m_w$,  though this may be a natural
choice.  The (matrix-valued) residue of a meromorphic matrix-valued function, $\bfM(s)$,
 at a point $\zeta \in \IC$ will be
denoted as $\res[\bfM(s),\zeta]$, so for example, with $\bfW$ as in (\ref{WPoleRes}),
 $\res[\bfW,\gamma_k]=\be_k\, \bff_k^T$.

Notice that the transfer function, $\bfG$, associated with the system (\ref{ltisystemintro}) will be in $\cHtwoWmed$
if and only if $\bfA$ is stable and $\bD\bD_w = 0.$
For $\bfG\in\cHtwoWmed$,  define
 { \begin{align} 
\mathfrak{F}[\bfG](s)= & \bfG(s)\bfW(s)\bfW(-s)^T \label{Fmap}  \\
&+ \sum^{n_w}_{k=1}
\bfG(-\gamma_{k})\bfW(-\gamma_{k})\frac{\bff_k\,\be_k^T}{s+\gamma_{k}} \nonumber
\end{align}
}
\begin{lemma} \label{Fproperties}
For $\mathfrak{F}$ as defined in (\ref{Fmap})
\begin{enumerate}
\item[a.] $\mathfrak{F}$ is a bounded linear
transformation from $\cHtwoWmed$ to $\cHtwo$.
\item[b.] For any  $\bfG,\,\bfH\in\cHtwo$, 
$\left\langle \bfG,\ \bfH \right\rangle_{\cHtwoW} = \left\langle \mathfrak{F}[\bfG],\ \bfH \right\rangle_{\cHtwo}$.
Hence,  $\mathfrak{F}$ is a positive-definite, selfadjoint linear
operator on $\cHtwo$.
\end{enumerate}
\end{lemma}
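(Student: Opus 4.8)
The plan is to route everything through the product $\bfF:=\bfG\bfW$. For $\bfG\in\cHtwoWmed$ this $\bfF$ lies in $\cHtwo$: it is analytic in the open right half plane (as is $\bfG$, and $\bfW$ has all of its poles in the left half plane), and $\|\bfF\|_{\cHtwo}=\|\bfG\|_{\cHtwoW}<\infty$ directly from the definition of the weighted norm. It is also convenient to record that the pole-residue form (\ref{WPoleRes}) yields $\bfW(-s)^{T}=\bD_w^{T}-\sum_{k=1}^{n_w}\frac{\bff_k\be_k^{T}}{s+\gamma_k}$ and that $-\gamma_k$ lies in the open right half plane, since $\real(\gamma_k)<0$.

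For part a, I would first put (\ref{Fmap}) in closed form. Using $\bfF(-\gamma_k)=\bfG(-\gamma_k)\bfW(-\gamma_k)$ together with the elementary splitting $\bfF(s)\frac{\bff_k\be_k^{T}}{s+\gamma_k}=\frac{\bfF(s)-\bfF(-\gamma_k)}{s+\gamma_k}\,\bff_k\be_k^{T}+\frac{\bfF(-\gamma_k)\bff_k\be_k^{T}}{s+\gamma_k}$ for each $k$, one checks that the constant-residue pieces cancel exactly against the rational terms added in (\ref{Fmap}), leaving
\begin{equation*}
\mathfrak{F}[\bfG](s)=\bfF(s)\bD_w^{T}-\sum_{k=1}^{n_w}\frac{\bfF(s)-\bfF(-\gamma_k)}{s+\gamma_k}\,\bff_k\be_k^{T},\qquad\bfF=\bfG\bfW .
\end{equation*}
Linearity is immediate from this. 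Each difference quotient $\bfG_k(s):=\big(\bfF(s)-\bfF(-\gamma_k)\big)/(s+\gamma_k)$ has a removable singularity at $-\gamma_k$, the only point of the open right half plane where the $k$-th summand could possibly fail to be analytic, so $\mathfrak{F}[\bfG]$ is analytic in the entire open right half plane. Moreover $\bfG_k\in\cHtwo$ with $\|\bfG_k\|_{\cHtwo}\le|\real(\gamma_k)|^{-1}\|\bfF\|_{\cHtwo}$: under the Paley--Wiener identification $\cHtwo\cong L^{2}(0,\infty)$, $\bfG_k$ corresponds (up to sign) to $t\mapsto\int_{t}^{\infty}e^{\gamma_k(\tau-t)}\bff(\tau)\,d\tau$, where $\bff\in L^{2}(0,\infty)$ is the time-domain counterpart of $\bfF$; this is a convolution-type operation against a kernel of $L^{1}$-norm $|\real(\gamma_k)|^{-1}$, so Young's inequality applies. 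Hence $\mathfrak{F}[\bfG]\in\cHtwo$ and, using $\|\bfF(\imath\omega)\bD_w^{T}\|_F\le\|\bD_w\|_2\|\bfF(\imath\omega)\|_F$ and $\|\bfG_k(\imath\omega)\bff_k\be_k^{T}\|_F\le\|\be_k\|_2\|\bff_k\|_2\|\bfG_k(\imath\omega)\|_F$,
\begin{equation*}
\|\mathfrak{F}[\bfG]\|_{\cHtwo}\le\Big(\|\bD_w\|_2+\sum_{k=1}^{n_w}\tfrac{\|\be_k\|_2\|\bff_k\|_2}{|\real(\gamma_k)|}\Big)\|\bfF\|_{\cHtwo}=C\,\|\bfG\|_{\cHtwoW},
\end{equation*}
with $C$ depending only on $\bfW$; this is part a.

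For part b, I would write the target $\cHtwo$ pairing in the (unweighted) form of the definition, $\langle\mathfrak{F}[\bfG],\bfH\rangle_{\cHtwo}=\frac{1}{2\pi}\int_{-\infty}^{\infty}\Tr\big(\mathfrak{F}[\bfG](-\imath\omega)\,\bfH(\imath\omega)^{T}\big)\,d\omega$, and substitute (\ref{Fmap}) at $s=-\imath\omega$. The leading term $\bfG(-\imath\omega)\bfW(-\imath\omega)\bfW(\imath\omega)^{T}$ turns $\Tr(\,\cdot\,\bfH(\imath\omega)^{T})$ into precisely the integrand of $\langle\bfG,\bfH\rangle_{\cHtwoW}$, so it remains only to show that each residual term
\begin{equation*}
\Tr\!\Big(\bfG(-\gamma_k)\bfW(-\gamma_k)\bff_k\be_k^{T}\cdot\tfrac{1}{2\pi}\int_{-\infty}^{\infty}\frac{\bfH(\imath\omega)^{T}}{\gamma_k-\imath\omega}\,d\omega\Big)
\end{equation*}
vanishes, i.e., that $\int_{-\infty}^{\infty}\bfH(\imath\omega)^{T}\,(\gamma_k-\imath\omega)^{-1}\,d\omega=0$. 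As a function of the complex variable $\omega$, $\bfH(\imath\omega)^{T}$ is analytic in the open lower half plane (there $\real(\imath\omega)>0$), while the simple pole of $(\gamma_k-\imath\omega)^{-1}$ sits at $\omega=-\imath\gamma_k$ in the open upper half plane, since $\real(\gamma_k)<0$. Closing the contour by a large semicircle in the lower half plane and discarding the arc using the standard decay of $\cHtwo$ functions, the residue theorem delivers the claimed vanishing. Hence $\langle\mathfrak{F}[\bfG],\bfH\rangle_{\cHtwo}=\langle\bfG,\bfH\rangle_{\cHtwoW}$ for all $\bfG,\bfH\in\cHtwo$; since $\cHtwo\subset\cHtwoWmed$, part a shows $\mathfrak{F}$ restricts to a bounded operator on $\cHtwo$. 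Self-adjointness follows from the symmetry of both inner products on real systems, $\langle\mathfrak{F}[\bfG],\bfH\rangle_{\cHtwo}=\langle\bfG,\bfH\rangle_{\cHtwoW}=\langle\bfH,\bfG\rangle_{\cHtwoW}=\langle\mathfrak{F}[\bfH],\bfG\rangle_{\cHtwo}=\langle\bfG,\mathfrak{F}[\bfH]\rangle_{\cHtwo}$, and positive-definiteness from $\langle\mathfrak{F}[\bfG],\bfG\rangle_{\cHtwo}=\|\bfG\|_{\cHtwoW}^{2}$, which is positive whenever $\bfG\neq 0$ because $\|\cdot\|_{\cHtwoW}$ is a norm.

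I expect the only genuine subtlety to lie in part a: confirming that the difference quotients $\bfG_k$ really belong to $\cHtwo$---not merely being analytic in the right half plane with square-integrable boundary traces---which is where the Paley--Wiener/time-domain description does the work. Everything else is bookkeeping with the pole-residue form of $\bfW$ and a single routine contour integration.
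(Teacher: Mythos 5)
Your proof is correct, and it takes a genuinely different route from the paper's in both parts. For part (a), the paper establishes analyticity by computing the residues of $\bfG(s)\bfW(s)\bfW(-s)^T$ at the mirrored poles $-\gamma_k$ and checking that they cancel the rational terms added in (\ref{Fmap}); it then bounds the boundary $L^2$ norm term by term, controlling the constant matrices $\bfG(-\gamma_k)\bfW(-\gamma_k)$ through a Cauchy--Schwarz/point-evaluation estimate, $\|\bfG(-\gamma_k)\bfW(-\gamma_k)\|_2\leq\|\bfG\|_{\cHtwoW}/\sqrt{2|\real\gamma_k|}$. Your closed form $\mathfrak{F}[\bfG]=\bfF\bD_w^{T}-\sum_k\bfG_k\,\bff_k\be_k^{T}$ with difference quotients $\bfG_k$ packages the same cancellation more transparently: each summand is individually analytic in the right half plane and individually in $\cHtwo$, with the Paley--Wiener/Minkowski bound $\|\bfG_k\|_{\cHtwo}\leq|\real\gamma_k|^{-1}\|\bfF\|_{\cHtwo}$ replacing the point-evaluation estimate (the resulting constants differ from the paper's, but both yield boundedness). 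For part (b), the paper assumes $\bfH$ has simple poles, evaluates $\langle\mathfrak{F}[\bfG],\bfH\rangle_{\cHtwo}$ by residue calculus over a left-half-plane semicircle, reassembles the residue sum into the contour integral defining $\langle\bfG,\bfH\rangle_{\cHtwoW}$, and then removes the simple-pole restriction by a density argument; you instead substitute (\ref{Fmap}) directly and reduce everything to the single fact that $\int_{-\infty}^{\infty}\bfH(\imath\omega)^{T}(\gamma_k-\imath\omega)^{-1}\,\mathrm{d}\omega=\bzer$ because $\gamma_k$ lies in the left half plane, which avoids both the residue bookkeeping and the limiting step. The only point worth a remark is your arc estimate: for the rational, strictly proper $\bfH$ relevant throughout this paper the $\mathcal{O}(R^{-1})$ decay on the arc is immediate, whereas for a general element of $\cHtwo$ one should instead invoke the standard vanishing of the Cauchy integral of an $\cHtwo$ boundary function at points outside its half-plane of analyticity; this is a presentational refinement, not a gap.
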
 
The proof of this lemma and subsequent arguments employ an elementary result
that we list here. It is an immediate corollary to \cite[Lemma 1]{AntBG10}:
\begin{proposition} \label{G1G2H2}
Let $\bfG_1 \in \cHtwo$, $\bfG_2(s) = \frac{\bfc\bfb^T}{s-\mu} \in \cHtwo$, and 
$\bfG_3(s) = \frac{\bfc\bfb^T}{(s-\mu)^2} \in \cHtwo$.
 Then,  
\begin{align*}  
\hspace{-.5em} \langle \bfG_1,\bfG_2 \rangle_{\cHtwo} =  
   \bfc^T\,\overline{\bfG_1}(-\mu),\  &  \left\| \bfG_2 \right\|_{\cHtwo} = \frac{\| \bfc\|\|
\bfb\|}{\sqrt{2 |\real \mu |}} \\
\mbox{and }  
\langle \bfG_1,\bfG_3 \rangle_{\cHtwo}& =- \bfc^T\,\overline{\bfG_1}^{\,\prime}\!(-\mu)\bfb. 
  \end{align*} 
\end{proposition}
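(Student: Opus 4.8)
The statement is of the kind that follows by reducing to the scalar case and invoking \cite[Lemma~1]{AntBG10}, so the plan is to set up that reduction cleanly. First I would record two elementary facts: the trace identity $\Tr(\bfM\bfb\bfc^T)=\bfc^T\bfM\bfb$ for any $\bfM\in\IC^{p\times m}$, $\bfc\in\IC^{p}$, $\bfb\in\IC^{m}$, and the reflection identity $\overline{\bfG_1(\imath\omega)}=\overline{\bfG_1}(-\imath\omega)$, where $\overline{\bfG_1}(s):=\overline{\bfG_1(\bar s)}$ is the coefficient-conjugated transfer function, still analytic in the open right half-plane whenever $\bfG_1$ is. Using the trace identity with $\bfM=\overline{\bfG_1}(-\imath\omega)$ together with $\bfG_2(\imath\omega)^{T}=\bfb\bfc^{T}/(\imath\omega-\mu)$, the inner product collapses to a scalar integral,
\[
\langle\bfG_1,\bfG_2\rangle_{\cHtwo}=\frac{1}{2\pi}\int_{-\infty}^{\infty}\frac{\bfc^{T}\,\overline{\bfG_1}(-\imath\omega)\,\bfb}{\imath\omega-\mu}\,\mathrm{d}\omega,
\]
and likewise with $(\imath\omega-\mu)^{2}$ in the denominator for $\bfG_3$. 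Since $\bfG_2,\bfG_3\in\cHtwo$ forces $\real\mu<0$, the pole $s=\mu$ lies in the open left half-plane; closing the Bromwich contour to the left and using that $s\mapsto\overline{\bfG_1}(-s)$ is analytic there, the residue theorem returns $\bfc^{T}\overline{\bfG_1}(-\mu)\bfb$ at the simple pole and, after differentiating $s\mapsto\overline{\bfG_1}(-s)$, $-\bfc^{T}\overline{\bfG_1}'(-\mu)\bfb$ at the double pole --- the two claimed inner-product formulas.

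A self-contained alternative that I would also mention passes to the time domain: by Paley--Wiener, $\bfG_1$ has an impulse response $\bfg_1\in L_2(0,\infty)$, while $\bfG_2$ and $\bfG_3$ correspond to $\bfc\bfb^{T}e^{\mu t}$ and $\bfc\bfb^{T}t\,e^{\mu t}$ on $t\ge0$. Plancherel's theorem then gives $\langle\bfG_1,\bfG_2\rangle_{\cHtwo}=\bfc^{T}\!\big(\int_0^{\infty}\overline{\bfg_1(t)}\,e^{\mu t}\,\mathrm{d}t\big)\bfb$ and $\langle\bfG_1,\bfG_3\rangle_{\cHtwo}=\bfc^{T}\!\big(\int_0^{\infty}\overline{\bfg_1(t)}\,t\,e^{\mu t}\,\mathrm{d}t\big)\bfb$, both integrals converging absolutely by Cauchy--Schwarz since $\real\mu<0$; recognizing the bracketed integrals as the conjugates of the Laplace transform of $\bfg_1$ and of its negative derivative, each evaluated at $-\bar\mu$, reproduces the formulas with no contour estimates at all. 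The norm identity is then immediate: taking $\bfG_1=\bfG_2$ in the first formula gives $\|\bfG_2\|_{\cHtwo}^{2}=\bfc^{T}\overline{\bfG_2}(-\mu)\bfb$, and since $\overline{\bfG_2}(-\mu)=\overline{\bfc}\,\overline{\bfb}^{T}/(-2\real\mu)$ this equals $\|\bfc\|^{2}\|\bfb\|^{2}/(2|\real\mu|)$, whence the stated value on taking the square root.

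The only genuinely delicate point is justifying, in the first approach, that the large left-half-plane arc of the Bromwich contour contributes nothing in the limit; this is routine for $\cHtwo$ functions (the extra $|s|^{-1}$ or $|s|^{-2}$ decay in the integrand makes the arc estimate immediate) and is precisely what the cited \cite[Lemma~1]{AntBG10} already supplies, so strictly speaking nothing new needs to be checked. The time-domain argument sidesteps this entirely, at the modest price of invoking Plancherel on $L_2(0,\infty)$ and the absolute convergence of the resulting time integrals.
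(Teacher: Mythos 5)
Your proof is correct. The paper gives no argument for this proposition at all---it is stated as ``an immediate corollary to [Lemma~1]{AntBG10}''---and your residue-theorem derivation (collapse to a scalar integral via $\Tr(\bfM\bfb\bfc^T)=\bfc^T\bfM\bfb$, close the contour around the left half-plane where $s\mapsto\overline{\bfG_1}(-s)$ is analytic except at the pole $\mu$) is exactly the computation that citation encapsulates, and is the same contour technique the paper itself deploys in the proof of Lemma~\ref{Fproperties}. The Plancherel/time-domain alternative and the derivation of the norm identity from the inner-product formula are also sound; note only that the first identity as printed in the proposition omits the trailing $\bfb$, and your version $\langle\bfG_1,\bfG_2\rangle_{\cHtwo}=\bfc^T\overline{\bfG_1}(-\mu)\bfb$ is the intended statement (it is the form actually used later, e.g.\ in the proof of Theorem~\ref{thm:weight_intp_cond}).
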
 

\begin{proof}\textbf{of Lemma \ref{Fproperties}}:
Clearly, $\mathfrak{F}[\bfG]$ is linear in $\bfG$. 
 Let  $\bfG\in \cHtwoWmed$. 
 $\bfG(s)\bfW(s)\bfW(-s)^T$ has simple poles in the right half plane
at 
$-\gamma_{1},-\gamma_{2},\dots,-\gamma_{n_w},$ and 
{\small
\begin{align*}
\res[&\bfG(s)\bfW(s)  \bfW(-s)^T,-\gamma_{k}]\\
&=\lim_{s\rightarrow -\gamma_{k} }(s+\gamma_{k})\bfG(s)\bfW(s)\bfW(-s)^T \\
 &=\bfG(-\gamma_{k})\bfW(-\gamma_{k})\lim_{s\rightarrow -\gamma_{k}
}(s+\gamma_{k})\bfW(-s)^T  \\
  &=  -\bfG(-\gamma_{k})\bfW(-\gamma_{k})\lim_{s\rightarrow \gamma_{k}
}(s-\gamma_{k})\bfW(s)^T \\
  & =-\bfG(-\gamma_{k})\bfW(-\gamma_{k})\cdot\res[\bfW(s)^T,\gamma_{k}] \\
 &=-\bfG(-\gamma_{k})\bfW(-\gamma_{k})\ \bff_k\ \be_k^T.
\end{align*}
}Thus $\mathfrak{F}[\bfG](s)$ is analytic in the right-half plane.  To show that
$\mathfrak{F}[\bfG]\in\cHtwo$, observe first that $\bfG\cdot\bfW\in \cHtwo$ so that
for each  $k=1,\ldots,n_w$ :
\begin{align*}
\|\bfG&(-\gamma_k)\bfW(-\gamma_k)\|_2=\max_{\bfu,\,\bv}\frac{\bfu^*\,\left[
\bfG(-\gamma_k)\bfW(-\gamma_k)\right]\,\bv}{\|\bfu\|_2\, \|\bv\|_2} \\
&= \max_{\bfu, \bv}\frac{1}{\|\bfu\|_2\,
\|\bv\|_2}\left\langle\bfG(s)\bfW(s),\,\frac{\bv\bfu^*}{s-\gamma_k}\right\rangle_{\cHtwo}
\\
&\leq  \|\bfG\bfW\|_{\cHtwo} \cdot \max_{\bfu, \bv}\frac{\quad \left\|\frac{\bv
\bfu^*}{s-\gamma_k}\right\|_{\cHtwo}}{\|\bfu\|\,\|\bv\|} 
= \frac{\|\bfG\|_{\cHtwoW}}{\sqrt{2\, |\real\,\gamma_k |}},
\end{align*}
where the inequality follows from the Cauchy-Schwarz inequality in $\cHtwo$ and the final equality follows from Proposition \ref{G1G2H2}.
Notice that this amounts to the observation that point evaluation in the right half-plane
is a continuous map from $\cHtwoWmed$ to $\IC^{m\times p}$.  We now use this to
calculate
\begin{align*}
\|\mathfrak{F}&[\bfG]\|_{\cHtwo} \leq  \|\bfW\|_{\cHinf}\,
\|\bfG(s)\bfW(s)\|_{\cHtwo} \\
&+\sum^{n_w}_{k=1}
\|
\bfG(-\gamma_{k})\bfW(-\gamma_{k})\frac{\bff_k\be_k^T}{s+\gamma_{k}}\|_{\cHtwo}
\\
&\leq  \left( \|\bfW\|_{\cHinf}
+ \sum_{k=1}^{n_w} \frac{\|\bff_k\|\,\|\be_k\|}{\sqrt{2\, |\real\,\gamma_k|}}
\right)\  \|\bfG\|_{\cHtwoW},
\end{align*}
where we have used the triangle inequality in $\cHtwo$ and the observation that 
$\|\bM\bN\|_F\leq \|\bM\|_2 \|\bN\|_F$ for conforming matrices $\bM$ and $\bN$. Thus,
$\mathfrak{F}$ is a bounded linear transformation from $\cHtwoWmed$ to
$\cHtwo$.  

For assertion \ref{Fproperties}b, suppose first that
$\bfH$ has simple poles  $\{\mu_1,\dots,\mu_{\ell}\}$.  Note that
since $\mathfrak{F}[\bfG](-s)$ is analytic in the left half plane, 
$\mathfrak{F}[\bfG](-s)\bfH(s)^T$ will have poles  in the left halfplane
exactly at  $\{\mu_1,\dots,\mu_{\ell}\}$.

For any  $R>0$,   define a semicircular contour in the left
halfplane:
$
\mathcal{C}_{R}=\left\{z\left| z=\imath\omega \mbox{ with
}\omega\in[-R,R]\right. \right\}\cup \left\{z\left| z=R\,
e^{\imath\theta} \mbox{ with
}\theta\in[\frac{\pi}{2},\frac{3\pi}{2}]\right. \right\}.
$
For $R$ large enough, the region bounded by $\mathcal{C}_{R}$ contains 
$ \{\mu_1,\dots,\mu_{\ell}\} $.  Using the Residue Theorem and linearity of the
trace, we find 
{\small
 \begin{align*}
&\left\langle \mathfrak{F}[\bfG],\ \bfH \right\rangle_{\cHtwo} 
= \frac{1}{2\pi} \int_{-\infty}^{+\infty}\Tr\left(\mathfrak{F}[\bfG](-\imath
\omega)\ \bfH(\imath \omega)^T\right) \,\mathrm{d}\omega\\
&= \lim_{R\rightarrow\infty} \frac{1}{2\pi\imath}
\int_{\mathcal{C}_{R}}\Tr\left(\mathfrak{F}[\bfG](-s)\ \bfH(s)^T\right)
\,\mathrm{d}\omega\\
 & = \sum_{k=1}^{\ell} \Tr\left(\mbox{\textsf{res}}[\mathfrak{F}[\bfG](-s)
\bfH(s)^T,\mu_k]\right) \\
 & = \sum_{k=1}^{\ell} \Tr\left(\mathfrak{F}[\bfG](-\mu_k)
\mbox{\textsf{res}}[\bfH,\mu_k]^T\right) \\
 & = \sum_{k=1}^{\ell} \Tr\left(
\bfG(-\mu_k)\bfW(-\mu_k)\bfW(\mu_k)^T\mbox{\textsf{res}}[\bfH,\mu_k]^T\right) \\
& \quad + \sum_{k=1}^{\ell} \sum^{n_w}_{i=1}  \Tr\left(
\bfG(-\gamma_{i})\bfW(-\gamma_{i})\frac{\bff_i\be_i^T}{-\mu_k+\gamma_{i}}\mbox{
\textsf{res}}[\bfH,\mu_k]^T\right) \\
& = \sum_{k=1}^{\ell} \Tr\left(
\bfG(-\mu_k)\bfW(-\mu_k)\bfW(\mu_k)^T\mbox{\textsf{res}}[\bfH,\mu_k]^T\right) \\
& \quad +  \sum^{n_w}_{i=1}  \Tr\left(
\bfG(-\gamma_{i})\bfW(-\gamma_{i})\bff_i\be_i^T
\sum_{k=1}^{\ell}\frac{\mbox{\textsf{res}}[\bfH,\mu_k]^T}{\gamma_{i}-\mu_k}
\right) 
\end{align*}
} Since $\bfH$ has simple poles and is in $\cHtwo$,
$\sum_{k=1}^{\ell}\frac{\mbox{\textsf{res}}[\bfH,\mu_k]^T}{s-\mu_k}= \bfH(s)^T$.
Note that $ \{\mu_1,\dots,\mu_{\ell}\}\cup \{\gamma_{1},\dots,\gamma_{n_w}\}$ is
 precisely the set
of poles in the left half plane for the meromorphic function
$\bfG(-s)\bfW(-s)\bfW(s)^T \bfH(s)^T$.

So, we continue:
{\small
 \begin{align*}
&\left\langle \mathfrak{F}[\bfG],\ \bfH \right\rangle_{\cHtwo} \\
&\quad = \sum_{k=1}^{\ell} \Tr\left(
\bfG(-\mu_k)\bfW(-\mu_k)\bfW(\mu_k)^T\mbox{\textsf{res}}[\bfH,\mu_k]^T\right) \\
& \qquad +  \sum^{n_w}_{i=1}  \Tr\left(
\bfG(-\gamma_{i})\bfW(-\gamma_{i})\mbox{\textsf{res}}[\bfW,\gamma_i]
^T\bfH(\gamma_{i})^T\right)\\
&\quad =  \lim_{R\rightarrow\infty} \frac{1}{2\pi\imath} \int_{\mathcal{C}_{R}}
\Tr\left(\bfG(-s)\bfW(-s)\bfW(s)^T \bfH(s)^T\right)\  \mathrm{d}s \\
&\quad= \frac{1}{2\pi} \int_{-\infty}^{+\infty}\Tr\left(\bfG(-\imath \omega)\
\bfW(-\imath \omega)\bfW(\imath \omega)^T\bfH(\imath \omega)^T\right)
\,\mathrm{d}\omega\\
&\quad=\left\langle \bfG,\ \bfH \right\rangle_{\cHtwoW} 
\end{align*}
}  
This remains true independent of whether $\bfH$ has
simple poles or not: Take a sequence, $\bfH_k$, converging to $\bfH$ in
$\cHtwo$ with each $\bfH_k$ having simple poles. Then, appeal to the
continuity of the expressions
$\left\langle \bfG,\ \bfH_k \right\rangle_{\cHtwoW} 
= \left\langle \mathfrak{F}[\bfG],\ \bfH_k \right\rangle_{\cHtwo}$ with respect
to the $\cHtwo$ norm.  

$\mathfrak{F}$ is positive-definite and selfadjoint on $\cHtwo$ because, for $\bfG,\,\bfH\in\cHtwo$,
 {\small
\begin{align*}
\left\langle \mathfrak{F}[\bfG],\ \bfH \right\rangle_{\cHtwo}&= \left\langle
\bfG,\ \bfH \right\rangle_{\cHtwoW} 
=\overline{\left\langle \bfH,\ \bfG \right\rangle_{\cHtwoW}}  \\
&=\overline{\left\langle \mathfrak{F}[\bfH],\ \bfG \right\rangle_{\cHtwo}}
=\left\langle \bfG,\  \mathfrak{F}[\bfH] \right\rangle_{\cHtwo}
\end{align*}
}and 
$\left\langle \mathfrak{F}[\bfG],\ \bfG \right\rangle_{\cHtwo}=
\left\langle \bfG,\ \bfG \right\rangle_{\cHtwoW} > 0  \text{\, \, if  \, \,} 
\bfG\neq 0.  ~~\Box$
\end{proof}
Given state-space realizations for $\bfW\in\cHinf$ and $\bfG\in \cHtwoWmed$, one may obtain an explicit state-space realization for $\mathfrak{F}[\bfG](s)$.
\begin{lemma}\label{lem:conn_lcs_F}
  Suppose $\bW \in {\mathcal H}_{\infty}$ has simple poles at
$\{\gamma_1,\dots,\gamma_p\}$ and $\bG \in \cHtwoWmed $. Suppose further that $\bfW(s)$ has a realization as given in (\ref{WssRlzn}) and $\bfG(s) =
\bfC(s\bfI-\bfA)^{-1}\bfB + \bD$ from (\ref{ltisystemintro}).

 Then $\mathfrak{F}[\bG](s)$ as defined in (\ref{Fmap}) has a realization given
by 
{\small 
\begin{align}\label{eq:trans_func_repr}
  \mathfrak{F}[\bG]&(s) = \cbfC_\mathfrak{F} (s \bfI - \cbfA_\mathfrak{F})^{-1} \cbfB_\mathfrak{F}\\
 =&\underbrace{\begin{bmatrix} \bC & \bD\bC_w \end{bmatrix}}_{\cbfC_{\mathfrak{F}}}
 \Bigg( s\bfI - 
 \underbrace{\begin{bmatrix} \bA & \bB\bC_w \\ \bzer & \bA_w
\end{bmatrix}}_{{\cbfA_\mathfrak{F}}}\Bigg)^{-1}
\underbrace{ \begin{bmatrix}\bZ\bC_w^T + \bB\bD_w\bD_w^T \\ \bP_w \bC_w^T +
\bB_w\bD_w^T \end{bmatrix}}_{\cbfB_{\mathfrak{F}}}, \nonumber
\end{align}}
where $\bP_w$ and $\bZ$  solve, respectively, 
\begin{align}
  \bA_w & \bP_w + \bP_w \bA_w^T + \bB_w\bB_w^T = \bzer \quad \mbox{and} \label{eq:weight_lyap} \\
  \bA   \bZ& + \bZ \bA_w^T + \bB (\bC_w\bP_w+\bD_w\bB_w^T) =\bzer. \label{eq:weight_sylv}
\end{align}
\end{lemma}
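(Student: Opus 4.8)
The plan is to recognize $\mathfrak{F}[\bG](s)$ as the \emph{stable part} of the rational matrix function $\bG(s)\bW(s)\bW(-s)^T$ --- the summand carrying all of its left half-plane poles --- and then to read off a realization of that stable part from the realizations of $\bG$ and $\bW$. First I would record, via the residue computation already performed in the proof of Lemma~\ref{Fproperties}, that the poles of $\bG(s)\bW(s)\bW(-s)^T$ lying in the open right half-plane are exactly the simple poles $-\gamma_1,\dots,-\gamma_{n_w}$, with $\res[\bG\bW\bW(-\cdot)^T,-\gamma_k]=-\bG(-\gamma_k)\bW(-\gamma_k)\bff_k\be_k^T$. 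Consequently $\bG(s)\bW(s)\bW(-s)^T-\mathfrak{F}[\bG](s)$ is precisely the sum of the principal parts at these poles, so it is strictly proper and analytic in the closed \emph{left} half-plane, whereas $\mathfrak{F}[\bG](s)$ is strictly proper (here the constraint $\bD\bD_w=\bzer$ forces the value at $\infty$ to vanish) and analytic in the closed \emph{right} half-plane. A strictly proper rational matrix admits a \emph{unique} splitting into a part analytic in the closed right half-plane plus a part analytic in the closed left half-plane, since the difference of two such splittings would be an entire matrix function vanishing at $\infty$, hence $\bzer$ by Liouville. It therefore suffices to write $\bG(s)\bW(s)\bW(-s)^T$ as a sum of a term whose poles all lie in the left half-plane --- which must then equal $\mathfrak{F}[\bG]$ --- plus a term whose poles all lie in the right half-plane.

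To do this I would start from the cascade realization $\bG(s)\bW(s)=\cbfC_{\mathfrak{F}}(s\bfI-\cbfA_{\mathfrak{F}})^{-1}\cbfB_{GW}$ with $\cbfB_{GW}=\bigl[\begin{smallmatrix}\bB\bD_w\\ \bB_w\end{smallmatrix}\bigr]$ (the feedthrough term of $\bG\bW$ is $\bD\bD_w=\bzer$), combine it with $\bW(-s)^T=\bD_w^T-\bB_w^T(s\bfI+\bA_w^T)^{-1}\bC_w^T$, and expand:
\[
\bG\bW\bW(-\cdot)^T=\cbfC_{\mathfrak{F}}(s\bfI-\cbfA_{\mathfrak{F}})^{-1}\cbfB_{GW}\bD_w^T-\cbfC_{\mathfrak{F}}(s\bfI-\cbfA_{\mathfrak{F}})^{-1}\cbfB_{GW}\bB_w^T(s\bfI+\bA_w^T)^{-1}\bC_w^T .
\]
The first term has all poles in the left half-plane. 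For the second I would use the resolvent-splitting identity $(s\bfI-\cbfA_{\mathfrak{F}})^{-1}\bM(s\bfI+\bA_w^T)^{-1}=(s\bfI-\cbfA_{\mathfrak{F}})^{-1}\bX-\bX(s\bfI+\bA_w^T)^{-1}$ with $\bM=\cbfB_{GW}\bB_w^T$, valid for the unique $\bX$ (unique because $\cbfA_{\mathfrak{F}}$ and $-\bA_w^T$ have disjoint spectra) solving the Sylvester equation $\cbfA_{\mathfrak{F}}\bX+\bX\bA_w^T=\bM$. This recasts $\bG\bW\bW(-\cdot)^T$ as $\cbfC_{\mathfrak{F}}(s\bfI-\cbfA_{\mathfrak{F}})^{-1}\bigl(\cbfB_{GW}\bD_w^T-\bX\bC_w^T\bigr)$ (poles in the left half-plane) plus $\cbfC_{\mathfrak{F}}\bX(s\bfI+\bA_w^T)^{-1}\bC_w^T$ (poles in the right half-plane); by the uniqueness noted above, $\mathfrak{F}[\bG](s)=\cbfC_{\mathfrak{F}}(s\bfI-\cbfA_{\mathfrak{F}})^{-1}\bigl(\cbfB_{GW}\bD_w^T-\bX\bC_w^T\bigr)$.

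It then remains to pin down $\bX$. Writing $\bX=\bigl[\begin{smallmatrix}\bX_1\\ \bX_2\end{smallmatrix}\bigr]$ in block form conformable with $\cbfA_{\mathfrak{F}}$, and noting $\bM=\cbfB_{GW}\bB_w^T=\bigl[\begin{smallmatrix}\bB\bD_w\bB_w^T\\ \bB_w\bB_w^T\end{smallmatrix}\bigr]$, the Sylvester equation decouples because of the block-triangular structure of $\cbfA_{\mathfrak{F}}$: its lower block reads $\bA_w\bX_2+\bX_2\bA_w^T=\bB_w\bB_w^T$, which together with (\ref{eq:weight_lyap}) gives $\bX_2=-\bP_w$; substituting into the upper block yields $\bA\bX_1+\bX_1\bA_w^T=\bB(\bC_w\bP_w+\bD_w\bB_w^T)$, which together with (\ref{eq:weight_sylv}) gives $\bX_1=-\bZ$. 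Inserting $\bX=-\bigl[\begin{smallmatrix}\bZ\\ \bP_w\end{smallmatrix}\bigr]$ then gives $\cbfB_{GW}\bD_w^T-\bX\bC_w^T=\bigl[\begin{smallmatrix}\bZ\bC_w^T+\bB\bD_w\bD_w^T\\ \bP_w\bC_w^T+\bB_w\bD_w^T\end{smallmatrix}\bigr]=\cbfB_{\mathfrak{F}}$, which is the asserted realization. The delicate step is the one in the first paragraph: identifying $\mathfrak{F}[\bG]$ with the stable part of $\bG\bW\bW(-\cdot)^T$ --- this is where the simple-pole hypothesis on $\bW$ and the constraint $\bD\bD_w=\bzer$ are used --- together with making the stable/anti-stable uniqueness precise. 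Everything after that is routine block linear algebra with the triangular resolvent of $\cbfA_{\mathfrak{F}}$.
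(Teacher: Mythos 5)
Your proof is correct, but it takes a genuinely different route from the paper's. The paper verifies the realization \emph{directly}: it diagonalizes $\bA_w$, solves \eqref{eq:weight_lyap} and \eqref{eq:weight_sylv} column by column to express $\bP_w\bC_w^T$ and $\bZ\bC_w^T$ as sums over the poles $\gamma_k$, and then telescopes the resulting transfer function with the resolvent identity \eqref{eq:resolvent_ident} until it matches the defining formula \eqref{Fmap} term by term. You instead characterize $\mathfrak{F}[\bG]$ abstractly as the stable summand in the (unique, by Liouville) additive splitting of the strictly proper rational function $\bG(s)\bW(s)\bW(-s)^T$ into parts analytic in the closed right and closed left half-planes, and then produce that splitting in one stroke from the cascade realization of $\bG\bW$ via the resolvent identity $(s\bI-\cbfA_{\mathfrak{F}})^{-1}\bM(s\bI+\bA_w^T)^{-1}=(s\bI-\cbfA_{\mathfrak{F}})^{-1}\bX-\bX(s\bI+\bA_w^T)^{-1}$ attached to the block Sylvester equation $\cbfA_{\mathfrak{F}}\bX+\bX\bA_w^T=\bM$; the block-triangular structure of $\cbfA_{\mathfrak{F}}$ then identifies $\bX=-\bigl[\begin{smallmatrix}\bZ\\ \bP_w\end{smallmatrix}\bigr]$ and hence $\cbfB_{\mathfrak{F}}$. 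Your argument avoids the eigenvalue decomposition of $\bA_w$ and the explicit sums over $k$ entirely, and it explains conceptually \emph{why} the two linear matrix equations appear; its only extra burden is the preliminary identification of $\mathfrak{F}[\bG]$ with the stable part, which does rest on the residue computation and the properness check $\bD\bD_w=\bzer$ exactly as you flag (and the simple-pole hypothesis enters only there). The paper's computation, by contrast, is more mechanical but produces the partial-fraction form of $\mathfrak{F}[\bG]$ explicitly along the way, which it reuses elsewhere. Both proofs establish the lemma in full.
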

\begin{proof}
 We evaluate (\ref{eq:trans_func_repr}) in two parts.  Note first that since
$\bG \in \cHtwoWmed$, $\bD\bD_w=\bzer$.  We may directly compute a realization of
$\bG(s)\cdot\bW(s)$:
{\small 
\begin{align}
   &\begin{bmatrix} \bC & \bD\bC_w \end{bmatrix} 
   \begin{bmatrix} s\bI - \bA & -\bB\bC_w \\ \bzer & s\bI-\bA_w \end{bmatrix}^{-1}
   \begin{bmatrix} \bB\bD_w \\  \bB_w \end{bmatrix} \nonumber \\
          &\quad = \begin{bmatrix} \bC & \bD\bC_w \end{bmatrix} 
   \begin{bmatrix} (s\bI - \bA)^{-1}\bB \bW(s) \\ (s\bI-\bA_w)^{-1}\bB_w
\end{bmatrix}   = \bG(s)\bW(s). \label{eq:indp_trans_func}
 \end{align}}
$\bA_w$ has distinct eigenvalues by hypothesis; let its eigenvalue
decomposition be given 
as $\bA_w = \bU \mathbf{\Gamma} \bU^{-1},$ with $\mathbf{\Gamma} =
\mathrm{diag}(\gamma_1,\dots,\gamma_{n_w}).$ Postmultiply
\eqref{eq:weight_lyap} with $\bU^{-T}$:  
\begin{align*}
  \bA_w \tilde{\bP}_w + \tilde{\bP}_w \mathbf{\Gamma} + \bB_w \tilde{\bF} =
\bzer,
\end{align*} 
where { $\bP_w\bU^{-T}=\tilde{\bP}_w=\left[ \tilde{\bp}_1,\,\tilde{\bp}_2,\,\ldots,\,\tilde{\bp}_{n_w}  \right]$}
 and { $\bB_w^T\bU^{-T}=\tilde{\bF} = \left[ \tilde{\bff}_1,\,\tilde{\bff}_2,\,\ldots,\,\tilde{\bff}_{n_w}  \right].$ }
 Since $\Gamma$ is a diagonal matrix, we may solve for each
column of $\tilde{\bP}_w$ independently: 
  $\tilde{\bp}_k = (-\gamma_k \bI-\bA_w)^{-1}\bB_w \tilde{\bff}_k$.
Then defining $\tilde{\bE}=\bC_w \bU =\left[ \tilde{\be}_1,\,\tilde{\be}_2,\,\ldots,\,\tilde{\be}_{n_w}  \right]$, we have
\begin{align*} 
 \bP_w\bC_w^T &= \bP_w \bU^{-T} \bU^T\bC_w^T= \tilde{\bP}_w
\tilde{\bE}^T\\&=\sum_{k=1}^{n_w} (-\gamma_k
\bI-\bA_w)^{-1}\bB_w \tilde{\bff}_k\tilde{\be}_k^T.
\end{align*}
We follow the same development for  \eqref{eq:weight_sylv}; postmultiplication with  $\bU^{-T}$ yields
$$\bA \tilde{\bZ} + \tilde{\bZ}\mathbf{\Gamma}+\bB(\bC_w \tilde{\bP}_w +\bD_w
\tilde{\bF})=\bzer,$$
where $\tilde{\bZ}=\bZ\,\bU^{-T}=\left[ \tilde{\bz}_1,\,\tilde{\bz}_2,\,\ldots,\,\tilde{\bz}_{n_w}  \right]$.
Note  that 
$$
\bC_w\tilde{\bp}_{k} + \bD_w \tilde{\bff}_k = \bW(-\gamma_k)\tilde{\bff}_k
$$
so that
$
 \tilde{\bz}_k = (-\gamma_k \bI-\bA)^{-1}\bB\bW(-\gamma_k)\tilde{\bff}_k.
$
Drawing all together, we obtain
\begin{align*}
\bZ\bC_w^T &=\bZ\bU^{-T} \bU^T \bC_w^T = \tilde{\bZ} \tilde{\bE}^T \\ &=
\sum_{k=1}^{n_w} (-\gamma_k\bI-\bA)^{-1}\bB \bW(-\gamma_k)\tilde{\bff}_k\tilde{\be}_k^T.
\end{align*}
With these expressions, the remaining contribution to \eqref{eq:trans_func_repr} becomes
{\small
\begin{align*}
  &\begin{bmatrix} \bC &\bD \bC_w\end{bmatrix}
 \begin{bmatrix}s\bI-\bA & -\bB\bC_w \\ \bzer & s\bI-\bA_w\end{bmatrix}^{-1}
 \begin{bmatrix} \bZ \bC_w^T \\  \bP_w \bC_w^T\end{bmatrix}\\[.15in]
& \quad= \bC (s \bI-\bA)^{-1}\bZ \bC_w^T + \bG(s) \bC_w (s\bI-\bA_w)^{-1}\bP_w \bC_w^T \\[.15in]
&\quad= \sum_{k=1}^{n_w} \bC(s \bI-\bA)^{-1}(-\gamma_k\bI-\bA)^{-1}\bB\bW(-\gamma_k) \tilde{\bff}_k \tilde{\be}_k^T \\
&  \qquad + \sum_{k=1}^{n_w}
\bG(s)\bC_w(s\bI-\bA_w)^{-1}(-\gamma_k\bI-\bA_w)^{-1}\bB_w \tilde{\bff}_k \tilde{\be}_k^T \\
\end{align*}
}The following easily verified resolvent identity allows further simplification:
{\small 
\begin{equation}\label{eq:resolvent_ident}
\begin{aligned}
&\left(s\mathbf{I}-\mathbf{A}\right)^{-1}\left(-\gamma_k\mathbf{I}-\mathbf{A}
\right)^{-1} \\ &\qquad =
\frac{1}{s+\gamma_k}\left(-\gamma_k\mathbf{I}-\mathbf{A}\right)^{-1}-\frac{1}{
s+\gamma_k}\left(s\mathbf{I}-\mathbf{A}\right)^{-1}.
\end{aligned}
\end{equation}
}
Which then yields,
{\small 
\begin{align*}
\ldots & = \sum_{k=1}^{n_w} \frac{1}{s+\gamma_k}\left(\bG(-\gamma_k)-\bG(s)\right)\bW(-\gamma_k)\tilde{\bff}_k\tilde{\be}_k^T \\
& \qquad\qquad + \sum_{k=1}^{n_w} \frac{1}{s+\gamma_k}\bG(s)\left(\bW(-\gamma_k)-\bW(s)\right)\tilde{\bff}_k\tilde{\be}_k^T  \\
&= \sum_{k=1}^{n_w} 
\bG(-\gamma_k)\bW(-\gamma_k)\frac{\tilde{\bff}_k\tilde{\be}_k^T}{s+\gamma_k} - \bG(s)\bW(s)
\sum_{k=1}^{n_w} \frac{\tilde{\bff}_k\tilde{\be}_k^T}{s+\gamma_k}
\end{align*}
Postmultiplying \eqref{eq:indp_trans_func} with $\bD_w^T$ and combining with this last expression  gives
\begin{align*}
 \begin{bmatrix} \bC &\bD \bC_w\end{bmatrix} &  
 \begin{bmatrix}s\bI-\bA & -\bB\bC_w \\ \bzer & s\bI-\bA_w\end{bmatrix}^{-1}
\begin{bmatrix}\bZ\bC_w^T + \bB\bD_w\bD_w^T \\ \bP_w \bC_w^T + \bB_w\bD_w^T
\end{bmatrix}\\
 &=\bG(s)\bW(s)\left(\sum_{k=1}^{n_w}
\frac{\tilde{\bff}_k\tilde{\be}_k^T}{-s-\gamma_k} +\bD_w^T\right) \\ & \qquad +
\sum_{k=1}^{n_w} 
\bG(-\gamma_k)\bW(-\gamma_k)\frac{\tilde{\bff}_k\tilde{\be}_k^T}{s+\gamma_k} 
\\
  &= \mathfrak{F}[\bG](s). \quad \Box
\end{align*}
} 
 \end{proof}

\begin{lemma} \label{SylEqnSoln}
Suppose $\mathbf{M}_1$ and $\mathbf{M}_2$ are stable matrices.  The unique solution, $\mathbb{X}$, 
to the Sylvester equation
$$
\mathbf{M}_1\mathbb{X}+ \mathbb{X}\mathbf{M}_2 + \mathbf{N}=\bzer,
$$

is given by
$$
\mathbb{X}=\frac{1}{2\pi}\int_{-\infty}^{+\infty} (-\imath\omega\bfI
-\mathbf{M}_1)^{-1}  \mathbf{N} (\imath\omega\bfI
-\mathbf{M}_2)^{-1}\,\mathrm{d}\omega
$$
\end{lemma}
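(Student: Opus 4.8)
The plan is to split the argument into two independent parts: showing the Sylvester equation has a unique solution, and verifying that the displayed integral is that solution. For uniqueness I would recall that the linear operator $\mathbb{X}\mapsto\mathbf{M}_1\mathbb{X}+\mathbb{X}\mathbf{M}_2$ has eigenvalues $\lambda_i(\mathbf{M}_1)+\lambda_j(\mathbf{M}_2)$; since $\mathbf{M}_1$ and $\mathbf{M}_2$ are both stable, every such sum has strictly negative real part and in particular is nonzero, so the operator is invertible and the equation has exactly one solution for each $\mathbf{N}$. It remains to check that the proposed integral, which I will call $\mathbb{Y}$, is that solution. Note first that stability of $\mathbf{M}_1$ and $\mathbf{M}_2$ keeps the imaginary axis out of their spectra, so the integrand is defined for all $\omega$, and each resolvent factor is $O(\omega^{-1})$ as $|\omega|\to\infty$, making the integrand $O(\omega^{-2})$; hence the integral converges absolutely and $\mathbb{Y}$ is well defined.

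Next I would compute $\mathbf{M}_1\mathbb{Y}+\mathbb{Y}\mathbf{M}_2$ by moving the constant matrices inside the absolutely convergent integral and applying the elementary resolvent identities $\mathbf{M}_1(-\imath\omega\bfI-\mathbf{M}_1)^{-1}=-\bfI-\imath\omega(-\imath\omega\bfI-\mathbf{M}_1)^{-1}$ and $(\imath\omega\bfI-\mathbf{M}_2)^{-1}\mathbf{M}_2=-\bfI+\imath\omega(\imath\omega\bfI-\mathbf{M}_2)^{-1}$. Substituting these, the two contributions carrying the factor $\imath\omega$ are identical with opposite signs and cancel, leaving
\[ \mathbf{M}_1\mathbb{Y}+\mathbb{Y}\mathbf{M}_2 = -\frac{1}{2\pi}\int_{-\infty}^{+\infty}\!\!\Big[(-\imath\omega\bfI-\mathbf{M}_1)^{-1}\mathbf{N}+\mathbf{N}(\imath\omega\bfI-\mathbf{M}_2)^{-1}\Big]\,\mathrm{d}\omega, \]
where the right-hand side is now interpreted as the symmetric limit $\lim_{R\to\infty}\tfrac1{2\pi}\int_{-R}^{R}$, which exists because it coincides with the absolutely convergent integral we started from.

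The remaining ingredient is the identity $\tfrac1{2\pi}\int_{-\infty}^{+\infty}(\imath\omega\bfI-\mathbf{M})^{-1}\,\mathrm{d}\omega=\tfrac12\bfI$ for any stable $\mathbf{M}$, again as a principal value. I would establish this by recognizing $(\imath\omega\bfI-\mathbf{M})^{-1}$ as the Fourier transform of $t\mapsto e^{\mathbf{M}t}$ extended by zero to $t<0$ — a function of bounded variation lying in $L^1$ because $\mathbf{M}$ is stable — and invoking the Dirichlet--Jordan inversion theorem at the jump discontinuity at $t=0$, which returns $\tfrac12\big(e^{\mathbf{M}\cdot0^{+}}+0\big)=\tfrac12\bfI$; alternatively a contour-integration argument (closing in the left half-plane, summing the spectral projections to $\bfI$, and accounting for the $\bfI/s$ tail over the large semicircular arc, which contributes $-\tfrac12\bfI$) gives the same value. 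Applying this to $\mathbf{M}_1$ after the change of variable $\omega\mapsto-\omega$ and to $\mathbf{M}_2$ directly shows the bracketed integral above equals $\tfrac12\mathbf{N}+\tfrac12\mathbf{N}=\mathbf{N}$, so $\mathbf{M}_1\mathbb{Y}+\mathbb{Y}\mathbf{M}_2+\mathbf{N}=\bzer$; combined with uniqueness, $\mathbb{Y}=\mathbb{X}$.

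As a more conceptual alternative one could instead start from the classical time-domain formula $\mathbb{X}=\int_0^{\infty}e^{\mathbf{M}_1t}\mathbf{N}\,e^{\mathbf{M}_2t}\,\mathrm{d}t$, whose convergence and validity follow at once by differentiating $e^{\mathbf{M}_1t}\mathbf{N}e^{\mathbf{M}_2t}$ and using the fundamental theorem of calculus, and then pass to the frequency domain: write $e^{\mathbf{M}_2t}\mathbf{1}_{t\ge0}$ through its inverse Fourier transform and carry out the $t$-integration to produce the factor $(-\imath\omega\bfI-\mathbf{M}_1)^{-1}\mathbf{N}$. I expect the only real obstacle, in either route, to be exactly the handling of these conditionally convergent integrals — one of the relevant transforms is only $O(\omega^{-1})$, so the interchange of integrations cannot be justified by a bare appeal to Fubini and must instead be routed through a limiting/mollification argument or, as above, absorbed into the Dirichlet--Jordan theorem; the rest is routine bookkeeping.
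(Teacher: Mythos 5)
Your proof is correct. Note that the paper states Lemma \ref{SylEqnSoln} without any proof at all, treating it as a classical fact, so there is no ``paper's argument'' to compare against; what you supply is a legitimate self-contained verification. The two halves are both sound: uniqueness via the spectrum $\{\lambda_i(\mathbf{M}_1)+\lambda_j(\mathbf{M}_2)\}$ of the Sylvester operator, and the verification that the integral solves the equation via the resolvent identities $\mathbf{M}_1(-\imath\omega\bfI-\mathbf{M}_1)^{-1}=-\bfI-\imath\omega(-\imath\omega\bfI-\mathbf{M}_1)^{-1}$ and $(\imath\omega\bfI-\mathbf{M}_2)^{-1}\mathbf{M}_2=-\bfI+\imath\omega(\imath\omega\bfI-\mathbf{M}_2)^{-1}$, whose $\imath\omega$ terms cancel exactly pointwise. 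You correctly identify the only delicate point --- that the resulting single-resolvent integrals are only conditionally convergent --- and handle it properly by working with the truncated integrals $\int_{-R}^{R}$ and passing to the limit, which is legitimate because the left-hand side $\mathbf{M}_1\mathbb{Y}_R+\mathbb{Y}_R\mathbf{M}_2$ converges by the absolute convergence of the original $O(\omega^{-2})$ integrand. The key identity $\tfrac{1}{2\pi}\,\mathrm{p.v.}\!\int_{-\infty}^{+\infty}(\imath\omega\bfI-\mathbf{M})^{-1}\,\mathrm{d}\omega=\tfrac12\bfI$ for stable $\mathbf{M}$ is exactly the fact the paper itself invokes elsewhere (in the proofs of Lemma \ref{Fextension} and Theorem \ref{thm:equiv_opt_cond}, citing \cite{GugAB08}, in the form $\int(\imath\omega\bfI-\mathbf{M})^{-1}\,\mathrm{d}\omega=\pi\bfI$), and both of your derivations of it --- Dirichlet--Jordan inversion at the jump of $t\mapsto e^{\mathbf{M}t}\mathbf{1}_{t\ge 0}$, or the residue/semicircle computation --- are valid. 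The only cosmetic slip is the phrase that the principal-value integral ``coincides with the absolutely convergent integral we started from''; it does not coincide with it, but its existence does follow from the truncation argument you describe, so the logic stands. The alternative route through the time-domain formula $\mathbb{X}=\int_0^\infty e^{\mathbf{M}_1 t}\mathbf{N}e^{\mathbf{M}_2 t}\,\mathrm{d}t$ plus Parseval is equally standard and would also do.
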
 

\begin{lemma} \label{Fextension}
For $\mathfrak{F}$ as defined in (\ref{Fmap})
and any  $\bfG,\,\bfH\in\cHtwoWmed$, let $\bfH = \bC_H(s\bI -
\bA_H)^{-1}\bB_H + \bD_H.$  Then, 
\begin{enumerate}
\item[a.] $ \left\langle \mathfrak{F}[\bfG],\, \bD_H \right\rangle_{\cHtwo} 
 = \frac{1}{2}\, \left\langle \bfG,\ \bD_H \right\rangle_{\cHtwoW}$
\item[b.] $ \left\langle \mathfrak{F}[\bfG],\, \bfH \right\rangle_{\cHtwo}=
\left\langle \bfG,\, \bfH \right\rangle_{\cHtwoW} -\frac{1}{2} \left\langle \bfG,\ \bD_H \right\rangle_{\cHtwoW}$
\end{enumerate}
\end{lemma}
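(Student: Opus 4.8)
The plan is to deduce part (a) as the special case $\bfH=\bfD_H$ of part (b) by a direct computation with the defining formula (\ref{Fmap}), and then to obtain the general case of (b) by splitting off the feedthrough term and reusing Lemma~\ref{Fproperties}. Throughout, an expression $\langle\mathfrak{F}[\bfG],\bfH\rangle_{\cHtwo}$ whose second factor fails to be strictly proper is read as the symmetric improper integral $\lim_{R\to\infty}\frac{1}{2\pi}\int_{-R}^{R}\Tr(\overline{\mathfrak{F}[\bfG](\imath\omega)}\,\bfH(\imath\omega)^{T})\,\mathrm{d}\omega$; this limit exists since $\mathfrak{F}[\bfG]$ is rational and strictly proper (Lemma~\ref{lem:conn_lcs_F}), so the integrand is $O(|\omega|^{-1})$ with an odd leading term.

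For part (a), I would use $\overline{\mathfrak{F}[\bfG](\imath\omega)}=\mathfrak{F}[\bfG](-\imath\omega)$ and read $\mathfrak{F}[\bfG](-\imath\omega)$ off (\ref{Fmap}) with $s=-\imath\omega$. The term $\bfG(s)\bfW(s)\bfW(-s)^{T}$ contributes $\bfG(-\imath\omega)\bfW(-\imath\omega)\bfW(\imath\omega)^{T}$, whose trace against $\bfD_{H}^{T}$ integrates over $\IR$ to $\langle\bfG,\bfD_{H}\rangle_{\cHtwoW}$ by the very definition of the weighted inner product. Each correction term contributes a scalar factor $\frac{1}{2\pi}\int_{-\infty}^{\infty}\frac{\mathrm{d}\omega}{-\imath\omega+\gamma_{k}}=-\frac{1}{2}$, an elementary evaluation valid because $\gamma_{k}$ lies in the open left half-plane, so the correction terms together produce $-\frac{1}{2}\sum_{k=1}^{n_{w}}\Tr(\bfG(-\gamma_{k})\bfW(-\gamma_{k})\bff_{k}\be_{k}^{T}\bfD_{H}^{T})$. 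The last ingredient is $\sum_{k}\Tr(\bfG(-\gamma_{k})\bfW(-\gamma_{k})\bff_{k}\be_{k}^{T}\bfD_{H}^{T})=\langle\bfG,\bfD_{H}\rangle_{\cHtwoW}$, which is exactly the residue computation performed inside the proof of Lemma~\ref{Fproperties} with the constant $\bfD_{H}$ in place of $\bfH$; the only point to recheck is that the semicircular arc integral of $\bfG(-s)\bfW(-s)\bfW(s)^{T}\bfD_{H}^{T}$ still vanishes, which holds because $\bfG(-s)\bfW(-s)=O(|s|^{-1})$ (its constant term $\bfD\bfD_{w}$ is $\bzer$ since $\bfG\in\cHtwoWmed$) and $\bfW(s)^{T}\bfD_{H}^{T}=O(|s|^{-1})$ (its constant term $(\bfD_{H}\bfD_{w})^{T}$ is $\bzer$ since $\bfH\in\cHtwoWmed$). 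Summing the three contributions gives $\langle\mathfrak{F}[\bfG],\bfD_{H}\rangle_{\cHtwo}=\langle\bfG,\bfD_{H}\rangle_{\cHtwoW}-\frac{1}{2}\langle\bfG,\bfD_{H}\rangle_{\cHtwoW}=\frac{1}{2}\langle\bfG,\bfD_{H}\rangle_{\cHtwoW}$, which is part (a).

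For part (b), I would write $\bfH=\bfH_{0}+\bfD_{H}$ with $\bfH_{0}:=\bfC_{H}(s\bfI-\bfA_{H})^{-1}\bfB_{H}\in\cHtwo$ (strictly proper). Additivity of the extended $\cHtwo$-pairing in its second slot gives $\langle\mathfrak{F}[\bfG],\bfH\rangle_{\cHtwo}=\langle\mathfrak{F}[\bfG],\bfH_{0}\rangle_{\cHtwo}+\langle\mathfrak{F}[\bfG],\bfD_{H}\rangle_{\cHtwo}$. The first term equals $\langle\bfG,\bfH_{0}\rangle_{\cHtwoW}$: although Lemma~\ref{Fproperties}(b) is stated for $\bfG\in\cHtwo$, its proof uses only that $\bfG\in\cHtwoWmed$ (so that $\mathfrak{F}[\bfG]\in\cHtwo$ and $\bfD\bfD_{w}=\bzer$) together with strict properness of the second factor, hence it applies verbatim with second factor $\bfH_{0}$. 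The second term equals $\frac{1}{2}\langle\bfG,\bfD_{H}\rangle_{\cHtwoW}$ by part (a). Since $\langle\bfG,\cdot\rangle_{\cHtwoW}$ is linear in its second argument, $\langle\bfG,\bfH_{0}\rangle_{\cHtwoW}=\langle\bfG,\bfH\rangle_{\cHtwoW}-\langle\bfG,\bfD_{H}\rangle_{\cHtwoW}$, and summing yields $\langle\mathfrak{F}[\bfG],\bfH\rangle_{\cHtwo}=\langle\bfG,\bfH\rangle_{\cHtwoW}-\frac{1}{2}\langle\bfG,\bfD_{H}\rangle_{\cHtwoW}$.

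The two scalar integrals are routine; the substantive points are the bookkeeping that forces the correction terms of $\mathfrak{F}$ to contribute precisely one half --- equivalently, that pairing $\mathfrak{F}[\bfG]$ with a constant ``splits the difference'' between the weighted inner product and its residue part --- and the verification that in passing from $\bfG\in\cHtwo$ to $\bfG\in\cHtwoWmed$, and from a strictly proper second factor to a constant one, the semicircular arc contributions stay under control; these vanish precisely because of the membership constraints $\bfD\bfD_{w}=\bzer$ and $\bfD_{H}\bfD_{w}=\bzer$. The only other care needed is the well-definedness of $\langle\mathfrak{F}[\bfG],\bfD_{H}\rangle_{\cHtwo}$ as a symmetric improper integral, its second factor not lying in $\cHtwo$.
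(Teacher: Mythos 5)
Your proof is correct, but it takes a genuinely different route from the paper's for part (a) and supplies details the paper omits for part (b). The paper works entirely in state space: it evaluates $\left\langle \bfG, \bD_H\right\rangle_{\cHtwoW}$ by writing the frequency integral as the solution $\mathbb{X}$ of a Sylvester equation (Lemma \ref{SylEqnSoln}), identifies $\mathbb{X}=[\bZ;\,\bP_w]$ via \eqref{eq:weight_lyap}--\eqref{eq:weight_sylv}, and then evaluates $\left\langle \mathfrak{F}[\bfG], \bD_H\right\rangle_{\cHtwo}$ from the realization of Lemma \ref{lem:conn_lcs_F} using the principal-value identity $\frac{1}{2\pi}\int(-\imath\omega\bI-\cbfA_{\mathfrak{F}})^{-1}\mathrm{d}\omega=\frac{1}{2}\bI$; part (b) is then declared ``similar'' with details omitted. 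You instead work directly from the defining formula \eqref{Fmap}: the $\bfG(s)\bfW(s)\bfW(-s)^{T}$ term reproduces $\left\langle\bfG,\bD_H\right\rangle_{\cHtwoW}$ by definition, each residue-correction term contributes the scalar $\frac{1}{2\pi}\int \mathrm{d}\omega/(-\imath\omega+\gamma_k)=-\frac12$, and the sum of the residue coefficients is identified with $\left\langle\bfG,\bD_H\right\rangle_{\cHtwoW}$ by the same contour argument used in Lemma \ref{Fproperties} (your check that the arc vanishes because $\bD\bD_w=\bzer$ and $\bD_H\bD_w=\bzer$ is exactly the point that needs verifying). The two factors of $\frac12$ are the same principal-value fact in different clothing --- yours per scalar pole, the paper's for the block resolvent --- but your version makes the origin of the $\frac12$ more transparent and avoids the Sylvester machinery, at the price of assuming $\bfG$ rational only where the symmetric-improper-integral interpretation requires it (which the paper also implicitly assumes). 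Your part (b), via $\bfH=\bfH_0+\bD_H$, additivity, and the observation that the proof of Lemma \ref{Fproperties}(b) uses only $\bfG\in\cHtwoWmed$ and strict properness of the second factor, is a clean and correct completion of the step the paper leaves to the reader.
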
 
\begin{proof}
We may decompose $\bfH$ as $\bfH(s)=\bfH_0(s)+\bD_H$ with $\bfH_0\in\cHtwo$.  Since $\bfG,\,\bfH\in\cHtwoWmed$, 
$\bD_H\cdot\bD_w=\bzer$ and $\bD\cdot\bD_w=\bzer$. 
Using the realization of $\bfG\bfW$ in \eqref{eq:indp_trans_func}, we calculate
\begin{align*}
&\left\langle \bfG,\ \bD_H \right\rangle_{\cHtwoW} = \left\langle \bfG\bfW,\
\bD_H\bfW \right\rangle_{\cHtwo}  \\
& \quad = \frac{1}{2\pi}
\int_{-\infty}^{+\infty}\Tr\left(\bfG(-\imath\omega)\bfW(-\imath\omega)\
\bfW(\imath \omega)^T\bD_H^T\right) \,\mathrm{d}\omega\\
& \quad = \Tr\left(\begin{bmatrix} \bC & \bD\bC_w \end{bmatrix} \mathbb{X}
  \bC_w^T\bD_H^T\right)
\end{align*}
where 
{\small
\begin{align*}
\mathbb{X}&=\frac{1}{2\pi} \int_{-\infty}^{+\infty} 
(-\imath \omega \bI -\cbfA_{\mathfrak{F}})^{-1}
    \begin{bmatrix} \bB\bD_w \\  \bB_w \end{bmatrix} 
\bB_w^T(\imath\omega\bI-\bA_w^T)^{-1} \,\mathrm{d}\omega
\end{align*}}From Lemma \ref{SylEqnSoln}, this $\mathbb{X}$  is the unique
solution to the Sylvester equation

$$
\cbfA_{\mathfrak{F}}
\mathbb{X} + 
\mathbb{X}\ \bA_w^T +  \begin{bmatrix} \bB\bD_w\bB_w^T \\  \bB_w\bB_w^T \end{bmatrix}  =\bzer.
$$
Recalling (\ref{eq:weight_lyap}) and (\ref{eq:weight_sylv}), $\mathbb{X}$ evidently may be expressed as 
$
\mathbb{X}= \left[\begin{array}{l} \bZ \\  \bP_w  \end{array}\right].
$
Thus,  $\left\langle \bfG,\ \bD_H \right\rangle_{\cHtwoW} = 
\Tr\left( \bC\bZ\bC_w^T\bD_H^T + \bD\bC_w\bP_w\bC_w^T\bD_H^T \right) $.

Conversely, we may use (\ref{eq:trans_func_repr}), take account that
$\bD_w^T\bD_H^T=\bzer$, and calculate:
{\footnotesize
\begin{align*}
 &\left\langle \mathfrak{F}[\bfG],\ \bD_H \right\rangle_{\cHtwo} = 
 \frac{1}{2\pi} \int_{-\infty}^{+\infty}\Tr\left( 
 \cbfC_{\mathfrak{F}}
  (-\imath \omega \bI-\cbfA_{\mathfrak{F}})^{-1}
\cbfB_{\mathfrak{F}}\bD_H^T
\right)\,\mathrm{d}\omega\\
 &= \Tr \left(
 \cbfC_{\mathfrak{F}}
\left(\frac{1}{2\pi} \int_{-\infty}^{+\infty}
(-\imath \omega \bI-\cbfA_{\mathfrak{F}})^{-1}
\,\mathrm{d}\omega\right)
 \begin{bmatrix}\bZ\bC_w^T \\ \bP_w \bC_w^T \end{bmatrix}\bD_H^T  \right),
\end{align*}
}where the integral limit is to be interpreted as a principal value.
Because the matrix 
$\cbfA_{\mathfrak{F}}$
is stable, the integral reduces to $\pi\bfI$, so we have:
{\small 
\begin{align*}
 \left\langle \mathfrak{F}[\bfG],\ \bD_H \right\rangle_{\cHtwo} 
 &= \frac{1}{2}\, \Tr \left(  \bC\bZ\bC_w^T\bD_H^T + \bD\bC_w\bP_w\bC_w^T\bD_H^T
\right)  \\
 &= \frac{1}{2}\, \left\langle \bfG,\ \bD_H \right\rangle_{\cHtwoW}
\end{align*}
}
Part (b) is shown similarly.  We omit details. $\quad \Box$
\end{proof}

\subsection{Interpolatory  weighted-$\cHtwo$ optimality conditions}
\label{sec:weightedh2cond}

The feasible set for (\ref{eq:H2optProb}) consists of all stable transfer functions
in $\cHtwoWmed$ having order $n_r$ or less. This is a nonconvex set, hence as a
practical matter, finding a global minimizer is extremely difficult. Instead,
one typically seeks efficient \emph{local} minimizers.
Methods proposed in  \cite{Hal92} and  \cite{spanos92} may be used to
 find local minimizers to (\ref{eq:H2optProb}). However, these methods require solving a sequence of large-scale Lyapunov or
Riccati equations and so, rapidly become computationally intractable as system
order, $n$, and shaping filter order, $n_w$, increase. 

We approach (\ref{eq:H2optProb}) instead within an interpolatory framework similar
to that developed in \cite{Ani13} for SISO systems.  Computational complexity for interpolatory methods grows more slowly with increasing $n$ and $n_w$, hence much larger problems are feasible.   In contrast to the (SISO) results of \cite{Ani13},  we are able to treat general MIMO settings including non-zero feedthrough terms, which proves essential for weighted-$\cHtwo$ approximation.
  The  algorithm derived in \cite{Ani13} is heuristic, to the extent that it is inspired by necessary (SISO) optimality conditions but does not seek directly to satisfy them.  Our new algorithm proposed in Section \ref{sec:nowi}, on the other hand,
 directly originates from newly derived MIMO necessary conditions and  uses significantly different model reduction spaces,
ultimately producing near-optimal reduced models that will approach true optimality as reduction order $n_r$ grows.

We first derive interpolatory conditions that necessarily must hold for any reduced system,
$\bfG_r$, that solves (\ref{eq:H2optProb}).

\begin{theorem} \label{thm:weight_intp_cond}
Suppose that $\bfG_r\in \cHtwoWmed$ is a solution to (\ref{eq:H2optProb}). 
Suppose further that $\bfG_r$ has only simple poles,
$\{\lambda_1,\,\ldots,\,\lambda_{n_r}\}$  and is represented as: 
{\small 
\begin{equation}
\bfG_r(s)=\mathbf{C}_r\left(s\mathbf{I}-
\mathbf{A}_r\right)^{-1}\mathbf{B}_r +\bD_r
 =\sum_{k=1}^{n_r} \frac{\bc_k\ \bfb_k^T}{s-\lambda_k} +\bD_r \label{GrRepr}
 \end{equation}} where $\bfA_r \in \IR^{n_r \times n_r}$ and $\bfB_r\in {\mathbb
R}^{n_r\times
m}$, and  $\bfC_r\in {\mathbb R}^{p\times n_r}.$
Then $\bfG_r$ must satisfy for each $k=1,\,\ldots,\,n_r$,
\begin{subequations}\label{eq:weight_intp_cond}
\begin{align}
 \mathfrak{F}[\bfG](-\lambda_k)\bfb_k &= \mathfrak{F}[\bfG_r](-\lambda_k) \bfb_k
\label{rightInterpCond}\\[.1in]
\bc_k^T\mathfrak{F}[\bfG](-\lambda_k) &=\bc_k^T \mathfrak{F}[\bfG_r](-\lambda_k), \mbox{ and }
\label{leftInterpCond} \\[.1in]
\bc_k^T\mathfrak{F}^{\,\prime}[\bfG](-\lambda_k)\bb_k &=
\bc_k^T\mathfrak{F}^{\,\prime}[\bfG_r](-\lambda_k) \bb_k.    \label{bitanInterpCond} 
 \end{align}
 \setcounter{subeqnSave}{\value{equation}}  
 \addtocounter{parentequation}{-1}                 
 \setcounter{parenteqnSave}{\value{parentequation}}
  \end{subequations}
where $\mathfrak{F}$ is defined in (\ref{Fmap}) and 
$\mathfrak{F}^{\,\prime}[\,\cdot\,](s) = \frac{d\
}{ds}\mathfrak{F}[\,\cdot\,](s)$.
\end{theorem}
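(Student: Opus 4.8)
\emph{Proof plan.} The plan is to convert the weighted optimality question into an ordinary (unweighted) $\cHtwo$ stationarity question via the operator $\mathfrak{F}$, and then to read off the interpolation conditions from the pole--residue structure of $\bfG_r$ exactly as in the standard unweighted $\cHtwo$ theory; the only genuinely new ingredient is the bookkeeping needed for the feedthrough terms.

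First I would set $\mathcal{J}(\bfG_r)=\|\bfG-\bfG_r\|_{\cHtwoW}^2=\langle\bfG-\bfG_r,\,\bfG-\bfG_r\rangle_{\cHtwoW}$ and note that, near the minimizer $\bfG_r$ (which by hypothesis has exact order $n_r$ with simple poles), the feasible set of \eqref{eq:H2optProb} is a smooth manifold coordinatized by the residue directions $\bc_k,\bfb_k$, the poles $\lambda_k$, and the feedthrough $\bD_r$ in \eqref{GrRepr}. Stationarity of the (real-analytic) functional $\mathcal{J}$ forces $\real\langle\bfG-\bfG_r,\,\dot{\bfG}_r\rangle_{\cHtwoW}=0$ for every admissible tangent direction $\dot{\bfG}_r$. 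Perturbing $\bfb_k$, $\bc_k$ or $\lambda_k$ alters neither the order of $\bfG_r$ nor its feedthrough $\bD_r$, and (for small perturbations) preserves stability, so the corresponding tangent directions
$$
\frac{\bc_k\,(\delta\bfb)^T}{s-\lambda_k},\qquad \frac{(\delta\bc)\,\bfb_k^T}{s-\lambda_k},\qquad \frac{\bc_k\,\bfb_k^T}{(s-\lambda_k)^2}
$$
are admissible and, crucially, \emph{strictly proper}, hence lie in $\cHtwo$.

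Next I would remove the weight. Since each of the three directions has zero feedthrough, Lemma \ref{Fextension}(b), applied with $\bfH$ equal to that direction, gives $\langle\bfG-\bfG_r,\,\dot{\bfG}_r\rangle_{\cHtwoW}=\langle\mathfrak{F}[\bfG-\bfG_r],\,\dot{\bfG}_r\rangle_{\cHtwo}$; this is the step that correctly absorbs the (generally nonzero) feedthrough $\bD-\bD_r$ of $\bfG-\bfG_r$, and it is the reason one cannot merely invoke Lemma \ref{Fproperties}(b) here. By linearity of $\mathfrak{F}$ together with Lemma \ref{Fproperties}(a), the function $\bfE:=\mathfrak{F}[\bfG]-\mathfrak{F}[\bfG_r]=\mathfrak{F}[\bfG-\bfG_r]$ belongs to $\cHtwo$, and the stationarity conditions become $\langle\bfE,\,\dot{\bfG}_r\rangle_{\cHtwo}=0$ for the three tangent directions. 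Evaluating these through Proposition \ref{G1G2H2} with $\bfG_1=\bfE$: the $\delta\bfb$-direction yields $\bc_k^T\bfE(-\lambda_k)=0$, the $\delta\bc$-direction yields $\bfE(-\lambda_k)\bfb_k=0$, and the $\lambda_k$-direction yields $\bc_k^T\bfE'(-\lambda_k)\bfb_k=0$; recalling $\bfE=\mathfrak{F}[\bfG]-\mathfrak{F}[\bfG_r]$ and that $\mathfrak{F}[\bfG],\mathfrak{F}[\bfG_r]$ are real rational functions, these are exactly \eqref{leftInterpCond}, \eqref{rightInterpCond} and \eqref{bitanInterpCond}.

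The part I expect to need the most care is not any of these computations but the passage from ``the differential of $\mathcal{J}$ annihilates the real tangent space'' to the pointwise conditions stated \emph{at each individual} $\lambda_k$: when $\lambda_k$ is non-real it occurs in a conjugate pair $\{\lambda_k,\bar\lambda_k\}$ with conjugate residues, so an admissible \emph{real} perturbation must move $\lambda_k$ and $\bar\lambda_k$ (and their residues) simultaneously, and one then separates the ``real'' and ``imaginary'' perturbation directions to recover the condition at $\lambda_k$ alone. This reduction is standard in the interpolatory $\cHtwo$ literature, and I would cite it rather than reproduce it.
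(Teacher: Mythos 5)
Your proposal is correct and follows essentially the same route as the paper: a first-order perturbation of the residues $\bc_k,\bfb_k$ and of the poles $\lambda_k$ forces $\langle \bfG-\bfG_r,\cdot\rangle_{\cHtwoW}$ to vanish on the directions $\bc_k(\delta\bfb)^T/(s-\lambda_k)$, $(\delta\bc)\,\bfb_k^T/(s-\lambda_k)$ and $\bc_k\bfb_k^T/(s-\lambda_k)^2$, after which the operator $\mathfrak{F}$ and Proposition \ref{G1G2H2} convert these orthogonality relations into the pointwise conditions \eqref{rightInterpCond}--\eqref{bitanInterpCond}, exactly as in the paper's $\varepsilon$-perturbation argument. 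Your remark that the passage to $\langle\mathfrak{F}[\bfG-\bfG_r],\cdot\rangle_{\cHtwo}$ should be justified via Lemma \ref{Fextension}(b) with $\bD_H=\bzer$ (since $\bfG-\bfG_r$ generally carries the nonzero feedthrough $\bD-\bD_r$, so Lemma \ref{Fproperties}(b) does not apply verbatim) is exactly the right bookkeeping, and is a point the paper's own proof leaves implicit.
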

(Theorem \ref{thm:IRFInterpCond} provides one additional condition.)
\begin{proof}
Pick an arbitrary vector 
$\bg\in \IC^{p}$ with $\|\bg\|=1$ and an index $k$ with $1\leq k\leq n_r$. 
Suppose that  
\begin{equation*} 
\left\langle \bfG-\bfG_r,\ \frac{\bg\bfb_k^T}{s-\lambda_k}
\right\rangle_{\!\cHtwoW} \!\!= \alpha_0 \neq 0.
\end{equation*}
Define $\theta_0=\arg(\alpha_0)$ and for arbitrary $\varepsilon>0$, define a
perturbation to $\bfG_r$ as
$$
\widetilde{\bfG}_{r}^{(\varepsilon)}(s)= \frac{\bfc_k+\varepsilon\,e^{-\imath
\theta_0}\bg}{s-\lambda_k} \bfb_k^T
+ \sum_{i\neq k} \frac{\bfc_i\bfb_i^T}{s-\lambda_i}.
$$
Then, using (\ref{eq:absH2winner}) and Proposition \ref{G1G2H2}, we obtain
{\small 
\begin{align*}
\|\bfG_r-\widetilde{\bfG}_{r}^{(\varepsilon)}\|_{\cHtwoW}&=\left\|\frac{
-\varepsilon\,e^{-\imath \theta_0}}{s-\lambda_k}\bg\bfb_k^T \right\|_{\cHtwoW}
\\ &\leq \|\bfW\|_{\cHinf}\frac{
\|\bfb_k\|\varepsilon}{\sqrt{2|\real(\lambda_k)|}}.
\end{align*}
}{\small Thus, $\| \bfG_r(s)-\widetilde{\bfG}_{r}^{(\varepsilon)}(s)
\|_{\cHtwoW}
=\mathcal{O}(\varepsilon)$ as $ \varepsilon\rightarrow 0$.
Since $\bfG_r$ solves (\ref{eq:H2optProb}), }
{\small
\begin{align*}
    \| \bfG &- \bfG_r \|_{\cHtwoW}^{2} \leq
    \| \bfG - \widetilde{\bfG}_{r}^{(\varepsilon)} \|_{\cHtwoW}^{2}  \\ 
    &\leq \| (\bfG - \bfG_r) + (\bfG_r- \widetilde{\bfG}_{r}^{(\varepsilon)}) 
\|_{\cHtwoW}^{2}\\
    &\leq\| \bfG - \bfG_r \|_{\cHtwoW}^{2}
    + 2\real \left\langle
\bfG-\bfG_r,\,\bfG_r-\widetilde{\bfG}_{r}^{(\varepsilon)}\right\rangle_{
\!\cHtwoW} \\
& \quad +\| \bfG_r- \widetilde{\bfG}_{r}^{(\varepsilon)} \|_{\cHtwoW}^{2}.
\end{align*}
}{\small 
Thus, 
$$
0\leq 2\ \real \left\langle
\bfG-\bfG_r,\,\bfG_r-\widetilde{\bfG}_{r}^{(\varepsilon)}\right\rangle_{
\!\cHtwoW} +
\| \bfG_r - \widetilde{\bfG}_{r}^{(\varepsilon)} \|_{\cHtwoW}^{2}. $$}\\
This implies that $0\leq  -\varepsilon |\alpha_0| +\mathcal{O}(\varepsilon^2)$, 
which then leads to a contradiction; it must be that $\alpha_0=0$. But then
{\small
\begin{align*}
0&=\left\langle \bfG-\bfG_r,\ \frac{\bg\bfb_k^T}{s-\lambda_k}
\right\rangle_{\!\cHtwoW}  = \left\langle \mathfrak{F}[\bfG-\bfG_r],\
\frac{\bg\bfb_k^T}{s-\lambda_k}
\right\rangle_{\cHtwo} \\
 &= \bg^T\left(\mathfrak{F}[\bfG-\bfG_r](-\lambda_k)\right)\bfb_k, 
\end{align*}}
(using Proposition \ref{G1G2H2}) and since $\bfg$ was chosen
arbitrarily,
we must have {\small 
$$
0=\mathfrak{F}[\bfG-\bfG_r](-\lambda_k)\bfb_k=
\mathfrak{F}[\bfG](-\lambda_k)\bfb_k-\mathfrak{F}[\bfG_r](-\lambda_k)\bfb_k
$$ }which confirms (\ref{rightInterpCond}).  (\ref{leftInterpCond}) is shown
similarly, replacing 
$\frac{\bg\bfb_k^T}{s-\lambda_k}$ in the argument above with
$\frac{\bfc_k\bg^T}{s-\lambda_k}$ for arbitrary $\bg\in \IC^{m}$. 

To show (\ref{bitanInterpCond}), suppose that 
$
\left\langle \bfG-\bfG_r,\
\frac{\bfc_k\bfb_k^T}{(s-\lambda_k)^2}\right\rangle_{\!\cHtwoW} \!\!= \alpha_1
\neq 0.
$
and define $\theta_1=\arg(\alpha_1)$. 
For $\varepsilon>0$  sufficiently small, define
$$
\widetilde{\bfG}_{r}^{(\varepsilon)}(s)=
\frac{\bfc_k\bfb_k^T}{s-(\lambda_k+\varepsilon\,e^{-\imath \theta_1})} 
+ \sum_{i\neq k} \frac{\bfc_i\bfb_i^T}{s-\hat{\lambda}_i} 
$$
As $\varepsilon\rightarrow 0$, we have {\small 
\begin{align*}
\|\bfG_r-\widetilde{\bfG}_{r}^{(\varepsilon)}\|_{\cHtwoW}&=\left\|\frac{
-\varepsilon\,e^{-\imath \vartheta_1}\bfc_k\bfb_k^T}
{(s-\lambda_k)(s-(\lambda_k+\varepsilon\,e^{-\imath
\theta_1}))}\right\|_{\cHtwoW}  \\
&=\mathcal{O}(\varepsilon)
\end{align*} }
Following a similar argument as before, we find that $0\leq  -\varepsilon
|\alpha_1| +\mathcal{O}(\varepsilon^2)$ as 
$\varepsilon\rightarrow 0$,  which leads to a contradiction, forcing
$\alpha_1=0$.  This, in turn, implies from  Proposition \ref{G1G2H2},
{\small
\begin{align*}
0&=\left\langle \bfG-\bfG_r,\, \frac{\bfc_k\bfb_k^T}{(s-\lambda_k)^2}
\right\rangle_{\!\cHtwoW} 
\hspace*{-3ex} = \left\langle \mathfrak{F}[\bfG-\bfG_r],\,
\frac{\bfc_k\bfb_k^T}{(s-\lambda_k)^2} \right\rangle_{\cHtwo} \\
 &\qquad = -\frac{d\
}{ds}\left.\bc_k^T\left(\mathfrak{F}[\bfG-\bfG_r](s)\right)\bfb_k\right|_{
s=-\lambda_k}, 
\end{align*}
}
which gives (\ref{bitanInterpCond}).$\quad \Box$
\end{proof}

We have one additional necessary condition for optimality that arises 
from the presence of the weighting filter.  For $\bfG,\ \bfG_r\in \cHtwoWmed$, 
let $\bfF(t)$ and $\bfF_r(t)$ denote the impulse response functions associated 
respectively with 
$\mathfrak{F}[\bfG](s)$ and $\mathfrak{F}[\bfG_r](s)$.  That is, 
$\mathfrak{F}[\bfG]=\mathcal{L}\left\{\bfF\right\}$ and 
$\mathfrak{F}[\bfG_r]=\mathcal{L}\left\{\bfF_r\right\}$, where 
$\mathcal{L}\left\{\cdot\right\}$ is the Laplace transform. 

\begin{theorem} \label{thm:IRFInterpCond}
Assume the hypotheses and notation of Theorem \ref{thm:weight_intp_cond}. 
Then for all $\bn\in \mathsf{Ker}(\bD_w^T)$, 
\setcounter{equation}{\value{parenteqnSave}}
\begin{subequations} 
\setcounter{equation}{\value{subeqnSave}}
\begin{equation} \label{IRFInterpCond}
\bfF(0)\bn=\bfF_r(0)\bn.
\end{equation}
\end{subequations} 
\end{theorem}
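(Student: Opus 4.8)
The plan is to test the optimality of $\bfG_r$ against a perturbation of its \emph{feedthrough} term and then convert the resulting orthogonality into the stated condition on $\bfF$ and $\bfF_r$ at the origin. Concretely, I would fix $\bn\in\Ker(\bD_w^T)$ together with an arbitrary vector $\bg\in\IR^p$ and set $\bD_H=\bg\,\bn^T\in\IR^{p\times m}$. Because $\bD_H\bD_w=\bg\,(\bD_w^T\bn)^T=\bzer$, the constant matrix $\bD_H$ lies in $\cHtwoWmed$: indeed $\bD_H\bfW(s)=\bD_H\bfC_w(s\bfI-\bfA_w)^{-1}\bfB_w$ is strictly proper and stable, so $\|\bD_H\|_{\cHtwoW}=\|\bD_H\bfW\|_{\cHtwo}<\infty$. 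Moreover, for every $\varepsilon\in\IR$ the function $\bfG_r+\varepsilon\bD_H$ has the same poles as $\bfG_r$ (hence order at most $n_r$) and feedthrough term $\bD_r+\varepsilon\bD_H$ which still annihilates $\bD_w$; thus $\bfG_r+\varepsilon\bD_H$ is an admissible competitor in (\ref{eq:H2optProb}). It is precisely this admissibility requirement that forces $\bn\in\Ker(\bD_w^T)$, which is why that subspace appears in the statement.

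Next, exactly as in the perturbation argument of Theorem~\ref{thm:weight_intp_cond}, optimality of $\bfG_r$ yields, for every $\varepsilon\in\IR$,
{\small
\begin{align*}
\|\bfG-\bfG_r\|_{\cHtwoW}^2 &\le \|\bfG-\bfG_r-\varepsilon\bD_H\|_{\cHtwoW}^2 \\
&= \|\bfG-\bfG_r\|_{\cHtwoW}^2 - 2\varepsilon\,\langle\bfG-\bfG_r,\ \bD_H\rangle_{\cHtwoW} + \varepsilon^2\,\|\bD_H\|_{\cHtwoW}^2,
\end{align*}
}where the pairing is real because all functions involved represent real systems. Hence $2\varepsilon\,\langle\bfG-\bfG_r,\bD_H\rangle_{\cHtwoW}\le\varepsilon^2\|\bD_H\|_{\cHtwoW}^2$ for every $\varepsilon\in\IR$, which forces $\langle\bfG-\bfG_r,\bD_H\rangle_{\cHtwoW}=0$. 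Applying Lemma~\ref{Fextension}a with first argument $\bfG-\bfG_r\in\cHtwoWmed$ and constant term $\bD_H$ then gives $\langle\mathfrak{F}[\bfG-\bfG_r],\bD_H\rangle_{\cHtwo}=\tfrac{1}{2}\langle\bfG-\bfG_r,\bD_H\rangle_{\cHtwoW}=0$.

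It remains to evaluate this $\cHtwo$-pairing directly. Exactly as in the proof of Lemma~\ref{Fextension}, $\langle\mathfrak{F}[\bfG-\bfG_r],\bD_H\rangle_{\cHtwo}$ is the principal-value integral $\tfrac{1}{2\pi}\int_{-\infty}^{+\infty}\Tr\big(\mathfrak{F}[\bfG-\bfG_r](-\imath\omega)\,\bD_H^T\big)\,\mathrm{d}\omega$, and $\mathfrak{F}[\bfG-\bfG_r]=\mathfrak{F}[\bfG]-\mathfrak{F}[\bfG_r]$ is a stable, strictly proper rational function whose impulse response is $\bfF-\bfF_r$. Pulling the constant $\bD_H^T$ out of the integral and using $\tfrac{1}{2\pi}\int_{-\infty}^{+\infty}(-\imath\omega\bfI-\bfM)^{-1}\,\mathrm{d}\omega=\tfrac{1}{2}\bfI$ (principal value, $\bfM$ stable) — the very step that reduced an integral to $\pi\bfI$ in the proof of Lemma~\ref{Fextension} — one obtains
{\small
\begin{align*}
0 &= \langle\mathfrak{F}[\bfG-\bfG_r],\ \bD_H\rangle_{\cHtwo} = \tfrac{1}{2}\,\Tr\!\big((\bfF(0)-\bfF_r(0))\,\bD_H^T\big) \\
&= \tfrac{1}{2}\,\bg^T\big(\bfF(0)-\bfF_r(0)\big)\,\bn,
\end{align*}
}where $\bfF(0)$ and $\bfF_r(0)$ denote the (one-sided) values of these impulse responses at $t=0$. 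Since $\bg\in\IR^p$ was arbitrary, $\bfF(0)\,\bn=\bfF_r(0)\,\bn$, which is (\ref{IRFInterpCond}).

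The only step with real content is this last identity — recognizing the $\cHtwo$-pairing of the strictly proper $\mathfrak{F}[\bfG-\bfG_r]$ with the constant $\bD_H$ as $\tfrac{1}{2}\Tr\big((\bfF(0)-\bfF_r(0))\bD_H^T\big)$; but since $\mathfrak{F}[\bfG-\bfG_r]$ is rational, stable and strictly proper, this is precisely the resolvent-integral computation already carried out for Lemma~\ref{Fextension}, so nothing genuinely new is required. Everything else — chiefly checking that $\bD_H$ is an admissible competitor, which is what pins down the subspace $\Ker(\bD_w^T)$, and the first-order perturbation estimate — is routine bookkeeping.
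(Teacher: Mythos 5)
Your proof is correct and follows essentially the same route as the paper's: perturb $\bfG_r$ by a constant rank-one matrix $\varepsilon\,\bg\bn^T$ with $\bn\in\Ker(\bD_w^T)$, deduce $\langle\bfG-\bfG_r,\ \bg\bn^T\rangle_{\cHtwoW}=0$ from optimality, and convert this via the $\mathfrak{F}$-pairing and the principal-value resolvent integral into $\bfF(0)\bn=\bfF_r(0)\bn$. Your explicit appeal to Lemma~\ref{Fextension}a (with its factor $\tfrac12$) is in fact slightly more careful than the paper's direct identification of the weighted and unweighted inner products at that step, though the discrepancy is harmless since the quantity in question vanishes.
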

\begin{proof} 
Pick $\bm\in {\mathbb R}^{p}$ and $\bn\in \mathsf{Ker}(\bD_w^T)$, arbitrarily. 
 From (\ref{WPoleRes}),  $\displaystyle \bm\, \bn^T\ \bfW(s) =\sum_{k=1}^{n_w} (\bn^T\be_k)\ \frac{\bm\,  \bff_k^T}{s-\gamma_k} $ is evidently an $\cHtwo$ function.  Hence,  $\bm\, \bn^T\in \cHtwoWmed $.
 Suppose that  
\begin{equation*}  
\left\langle \bfG-\bfG_r,\ \bm\, \bn^T
\right\rangle_{\!\cHtwoW} \!\!= \alpha_0 \neq 0.
\end{equation*}
Define $\theta_0=\arg(\alpha_0)$ and for arbitrary $\varepsilon>0$, define a
perturbation to $\bfG_r$ as
$$
\widetilde{\bfG}_{r}^{(\varepsilon)}(s)= \varepsilon\,e^{-\imath
\theta_0}\,\bm\, \bn^T + \bfG_r(s)
$$
Arguments identical to those in the proof of Theorem \ref{thm:weight_intp_cond} lead to 
$$
0\leq -2\ \real \left\langle
\bfG-\bfG_r,\,\varepsilon\,\bm\, \bn^T \right\rangle_{\!\cHtwoW} +
\| \varepsilon\,\bm\, \bn^T \|_{\cHtwoW}^{2},
$$
implying that $0\leq  -\varepsilon |\alpha_0| +\mathcal{O}(\varepsilon^2)$, 
and leading to a contradiction as before; as a consequence, $\alpha_0=0$. But then
{\small 
\begin{align*}
0&=\left\langle \bfG-\bfG_r,\ \bm\, \bn^T \right\rangle_{\!\cHtwoW}
= \left\langle \mathfrak{F}[\bfG-\bfG_r],\
\bm\, \bn^T \right\rangle_{\cHtwo} \\
 &\quad = \bm^T \left[ \int_{-\infty}^{+\infty} \mathfrak{F}[\bfG-\bfG_r](\imath\omega)\,
\mathrm{d}\omega\right]\bn.
\end{align*}}
Since $\bm$ was chosen arbitrarily,
we must have {\small 
$$
\bzer = \left[ \int_{-\infty}^{+\infty}
\mathfrak{F}[\bfG-\bfG_r](\imath\omega)\,
\mathrm{d}\omega\right]\bn =\left[ \bfF(0)-\bfF_r(0)\right]\bn.
$$ }which confirms (\ref{IRFInterpCond}).  $\quad \Box$
\end{proof}

\section{The Halevi optimality conditions}  \label{sec:halevi}

Following \cite[Appendix A]{Hal92}, the first-order necessary conditions for a
locally optimal reduced model $\bG_r$ can be stated in terms of
solutions to linear matrix equations. Consider the set of matrix equations
defined by $\bG,\bG_r \in \cHtwoWmed $ and $\bW\in\mathcal{H}_{\infty}$ as follows: {\small
\begin{subequations}
  \begin{align}
     \cbfA_{\mathfrak{F}}
    \bX+
 \bX\bA_r^T +  
 \cbfB_{\mathfrak{F}}
\bB_r^T = \bzer,\label{eq:Hal_mat_eq_b}
    \end{align}
    \begin{equation}
    \begin{aligned}
     \bA_r \bP_r& + \bP_r\bA_r^T + \bB_r \begin{bmatrix}\bzer
    & \bC_w \end{bmatrix} \bX     \\  +& \left(\bX^T \begin{bmatrix} \bzer
\\  \bC_w^T
    \end{bmatrix} + \bB_r \bD_w \bD_w^T \right) \bB_r^T =\bzer,
    \label{eq:Hal_mat_eq_c}
    \end{aligned}
    \end{equation}
    \begin{align}
    \bA_r^T \bQ_r + \bQ_r \bA_r + \bC_r^T\bC_r
    &=\bzer,\label{eq:Hal_mat_eq_d} 
    \end{align}
  \begin{equation}
     \begin{aligned}
     \cbfA_{\mathfrak{F}}^T
    \bY +
\bY\bA_r   
= 
\begin{bmatrix} \bC^T \\ ((\bD-\bD_r)\bC_w)^T \end{bmatrix}
\bC_r - 
    \begin{bmatrix}\bzer \\
    \bC_w^T \end{bmatrix}\bB_r^T\bQ_r  .\label{eq:Hal_mat_eq_e}
  \end{aligned}
    \end{equation}  
\end{subequations}}
If $\bG_r$ is locally $\cHtwoWmed $-optimal,
then:{\small 
\begin{subequations}\label{eq:Halevi_opt}
  \begin{equation}
  \begin{aligned}
    \bY^T \bX + \bQ_r \bP_r &= \bzer,\label{eq:Halevi_opt_a}
    \end{aligned}
    \end{equation}
    \begin{equation}
    \begin{aligned}
     \cbfC_{\mathfrak{F}}
     \bX - \bC_r\bP_r -\bD_r
\begin{bmatrix} \bzer & \bC_w
\end{bmatrix} \bX &=\bzer,\label{eq:Halevi_opt_b}
\end{aligned}
\end{equation}
\begin{equation}
\begin{aligned}
       \bY^T& 
       \cbfB_{\mathfrak{F}}
+\bQ_r \left( \bB_r
\bD_w\bD_w^T +
\bX^T     \begin{bmatrix}\bzer \\ \bC_w^T \end{bmatrix}\right)    
=\bzer
    ,\label{eq:Halevi_opt_c} 
    \end{aligned}
    \end{equation}
    \begin{equation}
    \begin{aligned}
    \bC_r \bX^T \begin{bmatrix} \bzer \\
\bC_w^T \end{bmatrix}\bN - \bC\bZ\bC_w^T\bN = ( \bD -\bD_r) \bC_w \bP_w
\bC_w^T\bN  ,\label{eq:Halevi_opt_d}
  \end{aligned} 
  \end{equation}
\end{subequations}}where $\bN=[\bn_1,\dots,\bn_{\ell}]$ is a basis for 
$\mathsf{Ker}(\bD_w^T).$ 

Notice that for $\bW(s)=\bI,$ conditions (\ref{eq:Halevi_opt_a})-(\ref{eq:Halevi_opt_c})
coincide with the Wilson optimality conditions from \cite{Wil70}, while the final condition (\ref{eq:Halevi_opt_d}) is satisfied vacuously since in this case, $\mathsf{Ker}(\bD_w^T)=\{0\}$.


\subsection{Equivalence of the optimality conditions}

The close connection between Sylvester
equations and tangential interpolation in the unweighted case has been established  in \cite{GalVV04}. The model reduction bases that enforce tangential interpolation can be obtained as solutions to
special Sylvester equations. Moreover,  in \cite{GugAB08}, the necessary $\mathcal{H}_2$ optimality conditions in the form of Sylvester equations  from \cite{Wil70} have been
shown to be equivalent to the interpolatory conditions from \cite{MeiL67,GugAB08}.  For the weighted case,  there are two frameworks as well: the interpolatory conditions \eqref{rightInterpCond}-\eqref{IRFInterpCond} we developed here and the linear matrix equations based conditions \eqref{eq:Halevi_opt_a}-\eqref{eq:Halevi_opt_d} of  Halevi \cite{Hal92}. Since these are only  
necessary conditions, their equivalence  is not obvious. 
We formally establish this equivalency.

\begin{theorem}\label{thm:equiv_opt_cond}
Let $\bG,\bG_r \in \cHtwoWmed $ and $\bW\in {{\mathcal H}_{\infty}}.$ Assume that
$\bG_r$ has simple poles at $\{\lambda_1,\dots,\lambda_{n_r}\}.$
Then optimality conditions \eqref{rightInterpCond}-\eqref{IRFInterpCond} and \eqref{eq:Halevi_opt_a}-\eqref{eq:Halevi_opt_d}
are equivalent.

\end{theorem}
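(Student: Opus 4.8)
The plan is to solve the auxiliary equations (\ref{eq:Hal_mat_eq_b})--(\ref{eq:Hal_mat_eq_e}) explicitly and to read their unique solutions as tangential evaluations of $\mathfrak{F}[\bG]$ and $\mathfrak{F}[\bG_r]$. Since $\bG_r$ has only simple poles, diagonalize $\bA_r=\bV_r\bLamb\bV_r^{-1}$ with $\bLamb=\mathrm{diag}(\lambda_1,\dots,\lambda_{n_r})$ and let $\bb_k,\bc_k$ be the tangential residue directions of (\ref{GrRepr}). Both families of conditions transform covariantly under a state-space change of coordinates $\bT$ of $\bG_r$ (one checks $\bX\mapsto\bX\bT^{-T}$, $\bP_r\mapsto\bT^{-1}\bP_r\bT^{-T}$, $\bQ_r\mapsto\bT^{T}\bQ_r\bT$, $\bY\mapsto\bY\bT$, which leaves (\ref{eq:Halevi_opt_a})--(\ref{eq:Halevi_opt_d}) unchanged, while (\ref{rightInterpCond})--(\ref{IRFInterpCond}) are coordinate-free), so there is no loss in taking $\bA_r=\bLamb$. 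All coefficient matrices occurring in (\ref{eq:Hal_mat_eq_b})--(\ref{eq:Hal_mat_eq_e}) --- namely $\cbfA_{\mathfrak{F}}$, which is block-triangular with stable diagonal blocks $\bA$ and $\bA_w$, and $\bA_r$ --- are stable, so $\bX,\bP_r,\bQ_r,\bY$ exist and are uniquely determined.

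The key step is a block identification. Partition $\bX$ conformally with $\cbfA_{\mathfrak{F}}$ as an $n\times n_r$ block over an $n_w\times n_r$ block $\bX_2$. The bottom block of (\ref{eq:Hal_mat_eq_b}) is, after transposition, exactly the Sylvester equation (\ref{eq:weight_sylv}) that Lemma \ref{lem:conn_lcs_F} associates with $\bG_r$ in place of $\bG$; hence $\bX_2=\bZ_r^{T}$, where $\bZ_r$ is that lemma's Sylvester solution for $\bG_r$. It follows that $\bX^{T}\begin{bmatrix}\bzer\\\bC_w^{T}\end{bmatrix}=\bZ_r\bC_w^{T}$ and that $\bB_r\bD_w\bD_w^{T}+\bX^{T}\begin{bmatrix}\bzer\\\bC_w^{T}\end{bmatrix}$ is precisely the top block of the $\mathfrak{F}[\bG_r]$-input matrix from Lemma \ref{lem:conn_lcs_F}. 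Solving (\ref{eq:Hal_mat_eq_b}) column by column gives that the $k$-th column of $\cbfC_{\mathfrak{F}}\bX$ equals $\cbfC_{\mathfrak{F}}(-\lambda_k\bI-\cbfA_{\mathfrak{F}})^{-1}\cbfB_{\mathfrak{F}}\bb_k=\mathfrak{F}[\bG](-\lambda_k)\bb_k$; substituting $\bX_2=\bZ_r^{T}$ into (\ref{eq:Hal_mat_eq_c}) and comparing with the analogous column-wise solve of the realization of Lemma \ref{lem:conn_lcs_F} for $\bG_r$ gives that the $k$-th column of $\bC_r\bP_r+\bD_r\begin{bmatrix}\bzer&\bC_w\end{bmatrix}\bX$ equals $\mathfrak{F}[\bG_r](-\lambda_k)\bb_k$. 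Thus (\ref{eq:Halevi_opt_b}) is, column by column, precisely (\ref{rightInterpCond}); and the transpose-analogue --- with $\bQ_r$, the observability Gramian of $(\bA_r,\bC_r)$ from (\ref{eq:Hal_mat_eq_d}), in the role of $\bP_r$, and $\bY$ from (\ref{eq:Hal_mat_eq_e}) in the role of $\bX$ --- shows that (\ref{eq:Halevi_opt_c}) is, row by row, (\ref{leftInterpCond}).

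With the two tangential conditions in hand, (\ref{eq:Halevi_opt_a}) is treated as in the unweighted equivalence of \cite{GugAB08}: inserting the integral/residue representations of $\bX,\bY,\bP_r,\bQ_r$ (Lemma \ref{SylEqnSoln}, Proposition \ref{G1G2H2}) into $\bY^{T}\bX+\bQ_r\bP_r$ and cancelling the contributions already annihilated by (\ref{rightInterpCond})--(\ref{leftInterpCond}), the $(j,k)$ entry collapses to $\bc_k^{T}\bigl(\mathfrak{F}^{\,\prime}[\bG]-\mathfrak{F}^{\,\prime}[\bG_r]\bigr)(-\lambda_k)\bb_k$ on the diagonal and to an already-vanishing quantity off it, so (\ref{eq:Halevi_opt_a}) $\Leftrightarrow$ (\ref{bitanInterpCond}). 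For the last condition, the proof of Theorem \ref{thm:IRFInterpCond} already reduces (\ref{IRFInterpCond}) to $\bigl[\int_{-\infty}^{\infty}\mathfrak{F}[\bG-\bG_r](\imath\omega)\,\mathrm{d}\omega\bigr]\bn=\bzer$; evaluating this principal-value integral through the realizations of Lemma \ref{lem:conn_lcs_F} (as in Lemma \ref{Fextension}), discarding the terms with a trailing $\bD_w^{T}$ factor via $\bD_w^{T}\bn=\bzer$, and using $\bX_2=\bZ_r^{T}$ once more turns it into $\bC_r\bZ_r\bC_w^{T}\bn-\bC\bZ\bC_w^{T}\bn=(\bD-\bD_r)\bC_w\bP_w\bC_w^{T}\bn$ for every $\bn\in\mathsf{Ker}(\bD_w^{T})$, which is precisely (\ref{eq:Halevi_opt_d}).

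I expect the main obstacle to be the block identification above --- recognising that the shaping-filter part of $\mathfrak{F}[\bG]$ and $\mathfrak{F}[\bG_r]$ is \emph{common}, so that $\bX_2=\bZ_r^{T}$ and $\bP_r$ is the ``top block'' of an $\mathfrak{F}[\bG_r]$ controllability-type Gramian --- since this is what reconciles the apparent dimensional mismatch in the Halevi equations (the Sylvester unknowns $\bX,\bY$ live in $\IR^{(n+n_w)\times n_r}$ while $\bP_r,\bQ_r$ live in $\IR^{n_r\times n_r}$) and what aligns each Halevi optimality relation with one interpolatory condition. A secondary difficulty is carrying the feedthrough matrices $\bD,\bD_r$ and the constraint $\bD_r\bD_w=\bzer$ consistently through the derivative computation behind (\ref{eq:Halevi_opt_a}) and through the left-sided argument behind (\ref{eq:Halevi_opt_c}), which is slightly more involved than the $\bX$-side owing to the $\bQ_r$-dependent source term in (\ref{eq:Hal_mat_eq_e}).
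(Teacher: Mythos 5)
Your proposal is correct and follows essentially the same route as the paper: diagonalize $\bA_r$, solve the Sylvester equations (\ref{eq:Hal_mat_eq_b})--(\ref{eq:Hal_mat_eq_e}) column by column, identify the bottom block of $\bX$ with $\bZ_r^T$ via (\ref{Z_rDef}), and read off (\ref{rightInterpCond})--(\ref{IRFInterpCond}) from the realization of Lemma \ref{lem:conn_lcs_F}, with the bitangential condition coming from the diagonal of (\ref{eq:Halevi_opt_a}) and the kernel condition from the principal-value integral. The only cosmetic issue is your reuse of $\bV_r$ for the eigenvector matrix of $\bA_r$, which collides with the paper's reduction basis notation.
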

\begin{proof}
  Assume $\bG_r$ satisfies \eqref{eq:Halevi_opt_a}-\eqref{eq:Halevi_opt_d} and that $\bA_r = \bR \bLamb
\bR^{-1}$ is an eigenvalue decomposition of $\bA_r.$ Multiplying
\eqref{eq:Hal_mat_eq_b} with $\bR^{-T}$ from  right gives
 \begin{align*}
 \cbfA_{\mathfrak{F}}
  \tilde{\bX} +
\tilde{\bX} \bLamb + 
\cbfB_{\mathfrak{F}}
\tilde{\bB}& = \bzer,
\end{align*}
where $\tilde{\bX}=\bX \bR^{-T}$ and $\tilde{\bB} = \bB_r^T\bR^{-T}.$ 
This implies
{\small
\begin{equation}
\begin{aligned}\label{eq:Halevi_x}
\tilde{\bX}\, \bs_k = 
 \tilde{\bX}_k = 
 (-\lambda_k\bI-\cbfA_{\mathfrak{F}})^{-1}
 \cbfB_{\mathfrak{F}}
\bb_k,
\end{aligned}
\end{equation}
}where $\bs_k$ is the $k^{\rm th}$ unit vector.
Similarly, multiplying \eqref{eq:Hal_mat_eq_c} from  right with
$\bR^{-T}$ yields
\begin{align*}
  \bA_r \tilde{\bP} + \tilde{\bP}\bLamb + &\bB_r  
\begin{bmatrix}\bzer &
\bC_w
\end{bmatrix} \tilde{\bX} \\ &= - \left(\bX^T \begin{bmatrix} \bzer \\
\bC_w^T 
\end{bmatrix} +
\bB_r \bD_w \bD_w^T \right)\tilde{\bB}   ,
\end{align*}
where $\tilde{\bP}=\bP_r\bR^{-T}.$ 
Since for $\bX=\begin{bmatrix} \bX_1 \\ \bX_2 \end{bmatrix}$
we can conclude that $\bX_2 = \bZ_r^T,$ 
where $\bZ_r$ satisfies
\begin{equation} \label{Z_rDef}
  \bA_r   \bZ_r + \bZ_r \bA_w^T + \bB_r (\bC_w\bP_w+\bD_w\bB_w^T) =\bzer. 
  \end{equation}
  It also follows
\begin{align}
 \tilde{\bP}\,\bs_k =
 \tilde{\bP}_k &= (-\lambda_k \bI -\bA_r)^{-1} (\bZ_r\bC_w^T
+\bB_r\bD_w
\bD_w^T )\bb_k \nonumber\\ 
& \quad +(-\lambda_k \bI -\bA_r)^{-1}   \bB_r
\bC_w(-\lambda_k \bI - \bA_w)^{-1} \nonumber\\
 & \qquad \times(\bP_w \bC_w^T + \bB_w
\bD_w^T ) \bb_k.
\end{align}
Right multiplication of \eqref{eq:Halevi_opt_b} with
$\bR^{-T},$ gives 
$$
\cbfC_{\mathfrak{F}}
\tilde{\bX} - \bC_r \tilde{\bP}
-\bD_r \begin{bmatrix} \bzer & \bC_w
\end{bmatrix}\tilde{\bX}= \bzer.$$
Hence, due to Lemma \ref{lem:conn_lcs_F}, each column is
equivalent to \eqref{rightInterpCond}.
Now postmultiply \eqref{eq:Hal_mat_eq_d} with $\bR$ to obtain
\begin{align*}
  \bA_r^T \tilde{\bQ} + \tilde{\bQ}\bLamb  + \bC_r^T \tilde{\bC} = \bzer,
\end{align*}
where $\tilde{\bQ} = \bQ_r \bR$ and $\tilde{\bC} = \bC_r\bR.$ Hence, it
follows \begin{align}\label{eq:Halevi_q}
  \tilde{\bQ}\,\bs_k \tilde{\bQ}_k = (-\lambda_k \bI - \bA_r^T )^{-1} \bC_r^T\bc_k.
\end{align}
Also, postmultiplication of \eqref{eq:Hal_mat_eq_e}
with $\bR$ leads to
{\small
\begin{align*}
\cbfA_{\mathfrak{F}}^T
\tilde{\bY} +
\tilde{\bY} \bLamb 
=  \begin{bmatrix} \bC^T \\ ((\bD-\bD_r)\bC_w)^T
\end{bmatrix}
\tilde{\bC} -
    \begin{bmatrix}\bzer \\
    \bC_w^T \end{bmatrix}\bB_r^T\tilde{\bQ}  
\end{align*}}where $\tilde{\bY}= \bY\bR.$ In particular, we get 
{\small
\begin{align}\label{eq:Halevi_y}
 \tilde{\bY}\,\bs_k =&\tilde{\bY}_k =  
 (-\lambda_k\bI-\cbfA_{\mathfrak{F}})^{-T}
\\  \nonumber
&\times \left( \begin{bmatrix} \bzer \\ \bC_w^T \end{bmatrix}  \bB_r^T
(-\lambda_k \bI-\bA_r^T )^{-1} \bC_r^T +\bD_r^T
-\cbfC_{\mathfrak{F}}^T
\right)\bc_k.
\end{align}
}
We further have 
$
  \tilde{\bY}^T 
  \cbfB_{\mathfrak{F}}
+ \tilde{\bQ} \left( \bB_r \bD_w\bD_w^T
+  \bZ_r \bC_w^T \right) =\bzer$ due to \eqref{eq:Halevi_opt_c}.
Together with \eqref{eq:Halevi_q} and \eqref{eq:Halevi_y}, for each row it
thus holds
\begin{align*}
 \bzer&=
 -\bc_k^T\cbfC_{\mathfrak{F}}
(-\lambda_k\bI-\cbfA_{\mathfrak{F}})^{-1}\cbfB_{\mathfrak{F}}
 \\ &\ \ + \bc_k^T(\bC_r
(-\lambda_k \bI-\bA_r)^{-1}\bB_r+\bD_r) \\ &  \qquad \qquad \times  \bC_w 
(-\lambda_k \bI - \bA_w)^{-1} (\bB_w\bD_w^T+ \bP_w
\bC_w^T)\\
& \ \  +\bc_k^T\bC_r (-\lambda_k \bI -\bA_r)^{-1} \left(
\bB_r\bD_w \bD_w^T + \bZ_r\bC_w^T \right).
\end{align*}
Again, using Lemma \ref{lem:conn_lcs_F}, this
leads to  (\ref{leftInterpCond}).
Finally, pre- and postmultiplication of \eqref{eq:Halevi_opt_a} with
$\bR^T$ and $\bR^{-T}$ yields
\begin{align}\label{eq:Halevi_yx}
 \tilde{\bY}^T\tilde{\bX}+\tilde{\bQ}\tilde{\bP} = \bzer.
\end{align}
Using \eqref{eq:Halevi_x} - \eqref{eq:Halevi_y} for the diagonal of
\eqref{eq:Halevi_yx}, we find
\begin{align*}
\bzer &=-\bc_k^T 
\cbfC_{\mathfrak{F}}(-\lambda_k\bI-\cbfA_{\mathfrak{F}})^{-2}\cbfB_{\mathfrak{F}}
\bb_k \\
&\ \ + \bc_k^T(\bC_r (-\lambda_k \bI -
\bA_r)^{-1} \bB_r+\bD_r) \\ & \quad \qquad \times \bC_w (-\lambda_k \bI -
\bA_w)^{-2} ( \bB_w \bD_w^T +\bP_w \bC_w^T )
\bb_k
\\
&\ \ + \bc_k^T\bC_r (-\lambda_k \bI -\bA_r)^{-2} \left(\bZ_r
\bC_w^T + \bB_r \bD_w \bD_w^T \right) \\
&\ \ +\bc_k^T\bC_r (-\lambda_k \bI -\bA_r)^{-2} \bB_r \\ & \quad \qquad \times
\bC_w (-\lambda_k \bI - \bA_w)^{-1}(\bB_w \bD_w^T+ \bP_w \bC_w^T) \bb_k.
\end{align*}
Then, due to Lemma \ref{lem:conn_lcs_F}, 
 this implies \eqref{bitanInterpCond}.
Finally, due to \eqref{eq:Halevi_opt_d} we note that 
\begin{align*}
\begin{bmatrix} \bC_r &\bD_r \bC_w \end{bmatrix} \begin{bmatrix} \bZ_r
\bC_w^T \\ \bP_w \bC_w^T \end{bmatrix}\bN
= \begin{bmatrix} \bC & \bD \bC_w \end{bmatrix} \begin{bmatrix} \bZ \bC_w^T
\\ \bP_w \bC_w^T \end{bmatrix}\bN.
\end{align*}
From \cite{GugAB08},  $\int_{-\infty}^{\infty} (i \omega \bI
-\bM)^{-1}\, \mathrm{d}\omega = \pi \bI,$ for any stable matrix $\bM,$  and we
 conclude that 
 {\small
\begin{align*}
 & \frac{1}{\pi} \int_{-\infty}^{\infty}  \begin{bmatrix} \bC_r &\bD_r \bC_w
\end{bmatrix} \begin{bmatrix} i \omega \bI -\bA_r & -\bB_r\bC_w \\ \bzer & 
\imath\omega \bI - \bA_w \end{bmatrix} ^{-1}\begin{bmatrix} \bZ_r
\bC_w^T \\ \bP_w \bC_w^T \end{bmatrix}\bN \, \mathrm{d}{\omega} \\ & \ =
\frac{1}{\pi}
\int_{-\infty}^{\infty}  \begin{bmatrix} \bC &\bD \bC_w
\end{bmatrix} \begin{bmatrix} i \omega \bI -\bA & -\bB\bC_w \\ \bzer & 
\imath\omega \bI - \bA_w \end{bmatrix} ^{-1}\begin{bmatrix} \bZ \bC_w^T \\ \bP_w
\bC_w^T \end{bmatrix}\bN \, \mathrm{d}{\omega}
\end{align*}}
Hence, for all $\bn \in \mathsf{Ker}(\bD_w^T),$
$$\left[\int_{-\infty}^{\infty} \mathfrak{F}[\bG_r](\imath\omega)\,
\mathrm{d}\omega\right]\bn = \left[\int_{-\infty}^{\infty}
\mathfrak{F}[\bG](\imath\omega)\,
\mathrm{d}\omega\right]\bn,$$ 
which is equivalent to (\ref{IRFInterpCond}). Reversing the
arguments and using \eqref{eq:resolvent_ident} for the
offdiagonal entries of \eqref{eq:Halevi_opt_a} shows that
\eqref{rightInterpCond}-\eqref{IRFInterpCond} also imply
\eqref{eq:Halevi_opt_a}-\eqref{eq:Halevi_opt_d}. $\quad \Box$
 \end{proof}

\section{Frequency-weighted rational interpolation}  \label{sec:nowi}
We henceforth assume that the feedthrough term of
the original system, $\bG$, is zero: $\bD=\bzer.$  This is without loss of
generality 
since the general case may be recovered
 by reassigning 
$\bD_r\leftarrow \bD_r-\bD$.  
From the previous discussion, we have seen that frequency-weighted ${\mathcal
H}_2$-optimal approximants are mapped to Hermite interpolants via the mapping 
$\mathfrak{F}$ introduced in (\ref{Fmap}).   This presents a practical problem
of how to construct reduced order systems, $\bG_r$, such that
$\mathfrak{F}[\bG_r](s)$ interpolates $\mathfrak{F}[\bG](s)$ at selected points
in $\IC$, say at 
$\left\{\sigma_1,\,\sigma_2,\,\ldots,\,\sigma_{n_r}\right\}$, in selected
tangent
directions $\{\bb_1,\dots,\bb_{n_r}\}$ and $\{\bc_1,\dots,\bc_{n_r}\}.$ 
Using the realization developed in Lemma \ref{lem:conn_lcs_F} and standard
interpolation results, 
we construct reduction subspaces that will force interpolation:
{
\begin{equation}\label{eq:rrs}
\begin{aligned}
\Ran\begin{bmatrix}\mathbb{V}^{(a)} \\
\mathbb{V}^{(b)}\end{bmatrix} =\stackrelbelow{i=1,\ldots,
n_r}{\spn}&\left\{
(\sigma_i\bI-\cbfA_{\mathfrak{F}})^{-1}\cbfB_{\mathfrak{F}}
\bb_i\right\}. 
\end{aligned}
\end{equation}
}
and
{
\begin{align}\label{eq:lrs}
\Ran\begin{bmatrix}\mathbb{W}^{(a)} \\
\mathbb{W}^{(b)}\end{bmatrix}=\stackrelbelow{i=1,\ldots, n_r}{\spn}&\left\{
(\sigma_i\bI-\cbfA_{\mathfrak{F}}^T)^{-1}\cbfC_{\mathfrak{F}}^T
\bc_i \right\}.
\end{align}
}
Define $\bV_r,\,\bW_r\in \IC^{n\times n_r}$ so
that 
$\bW_r^T\bV_r=\bI$ and 
\begin{equation} \label{RanVrRanWr}
\begin{array}{c}
\Ran(\mathbf{V}_r)\supset \Ran\left\{\mathbb{V}^{(a)}  \right\}\\
\Ran(\mathbf{W}_r)\supset \Ran\left\{\mathbb{W}^{(a)}  \right\}.
\end{array}
\end{equation}
The reduced feedthrough term is computed from \eqref{eq:Halevi_opt_d}: 
{\small 
\begin{equation}\label{eq:red_feedthr}
  \bD_r = \bC\left(\bZ  -\bV_r \bZ_r\right) \bC_w^T
\bN (\bN^T\bC_w \bP_w \bC_w^T\bN )^{-1} \bN^T,
\end{equation}}
where $\bN$ is a basis for $\mathsf{Ker}(\bD_w^T).$
\begin{theorem}\label{thm:appr_intp}
Let $\bA_r=\bW_r^T \bA \bV_r,\ \bB_r=\bW_r^T\bB,\
\bC=\bC_r\bV_r,$ with $\bV_r$ and $\bW_r$  constructed as in
\eqref{eq:rrs}, \eqref{eq:lrs}, and \eqref{RanVrRanWr}. 
Suppose $\bD_r$ is determined by \eqref{eq:red_feedthr}. 
Then pick any interpolation point 
$\sigma \in \left\{\sigma_1,\,\sigma_2,\,\ldots,\,\sigma_{n_r}\right\}$,
with associated tangent directions: $\bb$ and $\bc$. 
Provided $\sigma \not \in \{\Lambda(\bA),\Lambda(\bA_r)\}$,
we have 
{\small
\begin{subequations}{
\begin{align*}
  \mathfrak{F}[\bG]&(\sigma)\bb- \mathfrak{F}[\bG_r](\sigma)\bb = \\ 
\bH_1&(\sigma)\left(\bZ  -\bV_r \bZ_r\right) \bC_w^T\bb-
\bC\left(\bZ  -\bV_r \bZ_r\right) \bH_2(\sigma)\bb \displaybreak[0]\\[1ex]
\bc^T\mathfrak{F}&[\bG](\sigma)- \bc^T\mathfrak{F}[\bG_r](\sigma)=  \\
\bc^T&\bH_1(\sigma)\left(\bZ  -\bV_r \bZ_r\right) \bC_w^T-
\bc^T\bC\left(\bZ  -\bV_r \bZ_r\right) \bH_2(\sigma),\displaybreak[0]\\[1ex]
  \bc^T\mathfrak{F}'[&\bG](\sigma)\bb -
\bc^T\mathfrak{F}'[\bG_r](\sigma)\bb= \\
\bc^T\bH_1'&(\sigma)\left(\bZ  -\bV_r \bZ_r\right) \bC_w^T \bb-
\bc^T\bC\left(\bZ  -\bV_r \bZ_r\right) \bH_2'(\sigma)\bb,
\end{align*}
and $\qquad\qquad\qquad\bF(0)\bn =\bF_r(0)\bn$,
}
\end{subequations}
}

\vspace{1em}
\noindent
where $\bfF(t)$ and $\bfF_r(t)$ are the impulse responses of 
$\mathfrak{F}[\bfG]$ and $\mathfrak{F}[\bfG_r]$, respectively, 
$\bn \in \mathsf{Ker}(\bD_w^T)$ is arbitrary,
{\begin{align*}
\label{eq:Hr}
\bH_1(s)&=\bC_r(s\bI-\bA_r)^{-1}\bW_r^T, \mbox{ and} \\[1ex]
\bH_2(s) &= \bC_w^T\bN (\bN^T\bC_w \bP_w \bC_w^T\bN )^{-1} \bN^T \times \nonumber \\
 & \qquad \bC_w(s \bI-\bA_w)^{-1}(\bP_w \bC_w^T+ \bB_w\bD_w^T). \nonumber
\end{align*}}
\end{theorem}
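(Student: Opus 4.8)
The plan is to argue through the explicit state-space realizations of $\mathfrak{F}[\bG]$ and $\mathfrak{F}[\bG_r]$ supplied by Lemma~\ref{lem:conn_lcs_F}. Since $\bD=\bzer$, $\mathfrak{F}[\bG]$ is realized by $(\cbfA_{\mathfrak{F}},\cbfB_{\mathfrak{F}},\cbfC_{\mathfrak{F}})$; applying that lemma to $\bG_r=\bC_r(s\bI-\bA_r)^{-1}\bB_r+\bD_r$ realizes $\mathfrak{F}[\bG_r]$ by the analogous block-triangular triple, say $(\cbfA_{\mathfrak{F},r},\cbfB_{\mathfrak{F},r},\cbfC_{\mathfrak{F},r})$, built from $\bA_r,\bB_r,\bC_r,\bD_r$, the \emph{same} $\bP_w$, and $\bZ_r$ solving \eqref{Z_rDef}. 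Partitioning the resolvents along the block structure, the ``weight state'' $\bv_b(s):=(s\bI-\bA_w)^{-1}(\bP_w\bC_w^T+\bB_w\bD_w^T)\bb$ coincides for $\bG$ and $\bG_r$ (they share $\bA_w$ and the bottom block of $\cbfB$), while the ``plant states'' are $\bv_a(s)=(s\bI-\bA)^{-1}\by(s)$ with $\by(s):=\bZ\bC_w^T\bb+\bu(s)$ and $\bu(s):=\bB\bD_w\bD_w^T\bb+\bB\bC_w\bv_b(s)$ for $\bG$, and $\bv_{r,a}(s)=(s\bI-\bA_r)^{-1}(\bZ_r\bC_w^T\bb+\bW_r^T\bu(s))$ for $\bG_r$ (here $\bB_r(\bD_w\bD_w^T\bb+\bC_w\bv_b(s))=\bW_r^T\bu(s)$). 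Consequently, for every $s\notin\Lambda(\bA)\cup\Lambda(\bA_r)$, $(\mathfrak{F}[\bG]-\mathfrak{F}[\bG_r])(s)\bb=\bC\bv_a(s)-\bC_r\bv_{r,a}(s)-\bD_r\bC_w\bv_b(s)$, and the transposed analogue holds for the left-tangential quantities.

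For \eqref{rightInterpCond} the engine is the pair of one-sided projection identities: if $(\sigma\bI-\bA)^{-1}\br\in\Ran(\bV_r)$ then $\bV_r(\sigma\bI-\bA_r)^{-1}\bW_r^T\br=(\sigma\bI-\bA)^{-1}\br$, and dually if $(\sigma\bI-\bA^T)^{-1}\bC^T\bc\in\Ran(\bW_r)$ then $\bW_r(\sigma\bI-\bA_r^T)^{-1}\bC_r^T\bc=(\sigma\bI-\bA^T)^{-1}\bC^T\bc$, both relying only on $\bW_r^T\bV_r=\bI$, $\bA_r=\bW_r^T\bA\bV_r$, $\bC_r=\bC\bV_r$. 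By construction \eqref{eq:rrs}--\eqref{RanVrRanWr}, $\bv_a(\sigma)=(\sigma\bI-\bA)^{-1}\by(\sigma)\in\Ran(\bV_r)$, and by \eqref{eq:lrs}, $(\sigma\bI-\bA^T)^{-1}\bC^T\bc\in\Ran(\bW_r)$. The only obstruction to exact interpolation is that $\bv_{r,a}$ carries $\bZ_r\bC_w^T\bb$ in place of the Galerkin-consistent $\bW_r^T\bZ\bC_w^T\bb$; writing $\bZ_r\bC_w^T\bb=\bW_r^T\bZ\bC_w^T\bb-\bW_r^T(\bZ-\bV_r\bZ_r)\bC_w^T\bb$ (using $\bW_r^T\bV_r=\bI$), applying the projection identity to the first summand, and carrying the correction through $\bC_r(\sigma\bI-\bA_r)^{-1}(\cdot)$ produces exactly $\bH_1(\sigma)(\bZ-\bV_r\bZ_r)\bC_w^T\bb$. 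The leftover term $\bD_r\bC_w\bv_b(s)$ equals $\bC(\bZ-\bV_r\bZ_r)\bH_2(s)\bb$ \emph{identically in $s$}, because $\bD_r$ is chosen as in \eqref{eq:red_feedthr} (the inner factor $\bN^T\bC_w\bP_w\bC_w^T\bN$ cancels). This yields the first identity; \eqref{leftInterpCond} follows by the transposed computation, where the analogous feedthrough cross-terms drop out by the dual projection identity.

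For the bitangential condition \eqref{bitanInterpCond} I would isolate the genuinely approximate piece. Let $\tilde\be(s):=(s\bI-\bA)^{-1}\by(s)-\bV_r(s\bI-\bA_r)^{-1}\bW_r^T\by(s)$ be the Petrov--Galerkin state error for the $s$-dependent ``input'' $\by(s)$. Contracting the first-paragraph identity on the left by $\bc^T$ and substituting for $\bZ_r\bC_w^T\bb$ as above, it rearranges to $\bc^T(\mathfrak{F}[\bG]-\mathfrak{F}[\bG_r])(s)\bb=\bc^T\bC\tilde\be(s)+\bc^T\bH_1(s)(\bZ-\bV_r\bZ_r)\bC_w^T\bb-\bc^T\bC(\bZ-\bV_r\bZ_r)\bH_2(s)\bb$, where the last two summands are exact for all $s$; hence it suffices that $\bc^T\bC\tilde\be(s)$ vanish to first order at $s=\sigma$. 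The value vanishes since $(\sigma\bI-\bA)^{-1}\by(\sigma)=\bv_a(\sigma)\in\Ran(\bV_r)$. For the derivative, differentiate $\tilde\be$ and use $(\sigma\bI-\bA)^{-2}\by(\sigma)=(\sigma\bI-\bA)^{-1}\bv_a(\sigma)$ with $\bv_a(\sigma)\in\Ran(\bV_r)$; then $\frac{d}{ds}\bc^T\bC\tilde\be(s)\big|_{s=\sigma}$ becomes a sum of terms of the form $\bc^T\bC\bigl[(\sigma\bI-\bA)^{-1}\br-\bV_r(\sigma\bI-\bA_r)^{-1}\bW_r^T\br\bigr]$, which vanishes for \emph{arbitrary} $\br$ once the two one-sided identities are combined --- this is precisely the classical mechanism by which a two-sided projection forces Hermite interpolation. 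Differentiating the explicit right-hand side then gives \eqref{bitanInterpCond}.

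Finally, $\bfF(0)\bn=\bfF_r(0)\bn$ reduces to $\cbfC_{\mathfrak{F}}\cbfB_{\mathfrak{F}}\bn=\cbfC_{\mathfrak{F},r}\cbfB_{\mathfrak{F},r}\bn$ (equivalently, up to a fixed scalar, $\bigl(\int_{-\infty}^{\infty}(\mathfrak{F}[\bG]-\mathfrak{F}[\bG_r])(\imath\omega)\,\mathrm{d}\omega\bigr)\bn$, via $\int_{-\infty}^{\infty}(\imath\omega\bI-\bM)^{-1}\,\mathrm{d}\omega=\pi\bI$ for stable $\bM$); using $\bD_w^T\bn=\bzer$ this collapses to $\bigl(\bC(\bZ-\bV_r\bZ_r)\bC_w^T-\bD_r\bC_w\bP_w\bC_w^T\bigr)\bn=\bzer$, which holds exactly by the definition \eqref{eq:red_feedthr} of $\bD_r$ (equivalently, by Halevi's condition \eqref{eq:Halevi_opt_d}). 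The step I expect to be most delicate is the bookkeeping in the second paragraph: cleanly separating the Galerkin-consistent reduced plant state from the actual one, transporting the correction $\bW_r^T(\bZ-\bV_r\bZ_r)\bC_w^T\bb$ through $\bC_r(\sigma\bI-\bA_r)^{-1}(\cdot)$, and checking --- via $\bW_r^T\bV_r=\bI$ and $\bC_r=\bC\bV_r$ --- that it reassembles into exactly $\bH_1(\sigma)(\bZ-\bV_r\bZ_r)\bC_w^T\bb$; the block-triangular manipulations of the lifted realizations are where sign and transpose slips are easiest to make.
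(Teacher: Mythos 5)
Your proposal is correct and takes essentially the same route as the paper: both work from the Lemma~\ref{lem:conn_lcs_F} realizations of $\mathfrak{F}[\bG]$ and $\mathfrak{F}[\bG_r]$, arrive at the same decomposition of the error into the two exact $\bH_1/\bH_2$ terms plus a Petrov--Galerkin remainder, and annihilate that remainder and its derivative at $\sigma$ via the range conditions on $\bV_r$ and $\bW_r$ (your observation that the $\bD_r$-term equals $\bC(\bZ-\bV_r\bZ_r)\bH_2(s)$ identically in $s$ is exactly the mechanism the paper uses). The only differences are presentational --- you carry out the projection argument component-wise through the one-sided identities rather than with the paper's skew-projectors ${\mathcal P}_r(s),{\mathcal Q}_r(s)$, and you verify $\bF(0)\bn=\bF_r(0)\bn$ directly from \eqref{eq:red_feedthr} where the paper defers to the proof of Theorem~\ref{thm:equiv_opt_cond}.
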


\begin{proof}:
We follow a pattern of proof given in \cite{AntBG10}. Define 
$\mathbb{V} =
\begin{bmatrix} \bV_r & \bzer \\ \bzer & \bI \end{bmatrix}$, $\mathbb{W} =
\begin{bmatrix} \bW_r & \bzer \\ \bzer & \bI \end{bmatrix}$, and 
$
\cbfA_\mathfrak{Fr} =
\begin{bmatrix} \bA_r & \bB_r\bC_w \\ \bzer &
\bA_w \end{bmatrix}$.
Define two (skew)
projectors via
\begin{align*}
 {\mathcal P}_r(s) &= \mathbb{V} (s \bI -\cbfA_\mathfrak{Fr})^{-1} \mathbb{W}^T (s \bI - \cbfA_\mathfrak{F}) \\[1ex]
{\mathcal Q}_r(s)&=  (s \bI - \cbfA_\mathfrak{F})  {\mathcal P}_r(s) (s \bI - \cbfA_\mathfrak{F}) ^{-1} \\
&= (s \bI - \cbfA_\mathfrak{F}) 
\mathbb{V} (s \bI -\cbfA_\mathfrak{Fr})^{-1}\mathbb{W}^T.
\end{align*}
 For all $s$ in a neighborhood of
$\sigma,$
we have ${\mathcal V}=\mathsf{Ran}({\mathcal P}_r(s))=\mathsf{Ker}(\bI-{\mathcal P}_r(s))$
and ${\mathcal W}^{\perp}=\mathsf{Ker}({\mathcal Q}_r(s))=\mathsf{Ran}(\bI-{\mathcal
Q}_r(s)).$ 
Now observe that
{\small 
\begin{align*}
 &\mathfrak{F}[\bG_r](s)= \\ & \begin{bmatrix} \bC_r & \bzer \end{bmatrix}
\begin{bmatrix} s\bI-\bA_r & -\bB_r \bC_w \\ \bzer & s\bI-\bA_w
\end{bmatrix}^{-1} \begin{bmatrix} \bW_r^T\bZ\bC_w^T + \bB_r \bD_w \bD_w^T \\
\bP_w\bC_w ^T + \bB_w\bD_w^T \end{bmatrix} \\
 & \quad  - \begin{bmatrix} \bC_r &\bzer \end{bmatrix}
\begin{bmatrix} s\bI-\bA_r & -\bB_r \bC_w \\ \bzer & s\bI-\bA_w
\end{bmatrix}^{-1} \begin{bmatrix} (\bW_r^T\bZ-\bZ_r)\bC_w^T 
\\
 \bzer \end{bmatrix} \\
& \qquad +\bD_r \bC_w(s\bI-\bA_w)^{-1} (\bP_w\bC_w ^T+ \bB_w \bD_w^T) .
\end{align*}}
Hence, we can write {
\begin{equation}\label{eq:FminusFr}
\begin{aligned}
 &\mathfrak{F}[\bG](s) -
\mathfrak{F}[\bG_r](s) \\
& = \bH_1(s)\left(\bZ  -\bV_r \bZ_r\right) \bC_w^T-
\bC\left(\bZ  -\bV_r \bZ_r\right) \bH_2(s)\\
&\quad+\cbfC_{\mathfrak{F}} (s\bI-\cbfA_{\mathfrak{F}})^{-1}(\bI-{\mathcal
Q}_r(s)) 
(s\bI-\cbfA_{\mathfrak{F}})\\
&\qquad \times(\bI-{\mathcal
P}_r(s))(s\bI-\cbfA_{\mathfrak{F}})^{-1}\cbfB_{\mathfrak{F}}
\end{aligned}
\end{equation}
}
Evaluating this expression at $s=\sigma$ and postmultiplying by $\bb$ yields
the first assertion; premultiplying by $\bc^T$ yields the second. We find that 
{\small
\begin{align*}
((\sigma+\varepsilon)\bI-\cbfA_{\mathfrak{F}})^{-1} = (\sigma
\bI-\cbfA_{\mathfrak{F}})^{-1} - \varepsilon
(\sigma\bI-\cbfA_{\mathfrak{F}})^{-2}
+ {\mathcal O}(\varepsilon^2).
\end{align*}}
Evaluating \eqref{eq:FminusFr} at $s=\sigma+\varepsilon,$ premultiplying by
$\bc^T,$ and postmultiplying by $\bb$ together with $\varepsilon\to 0$ yields
the third statement. The last statement results from the proof
of Theorem \ref{thm:equiv_opt_cond} and the fact that $\bN$ is a basis of
$\mathsf{Ker}(\bD_w^T).$ Note also that we have $\bD_r\bD_w=\bzer.$
$\quad \Box$ \end{proof}

Conditions for exact interpolation are now evident:
\begin{corollary} \label{cor:exactness}
Let $\bfG_r$ denote the reduced order model of Theorem \ref{thm:appr_intp}. 
If $\bfG_r$  is stable and $\mathsf{Ran}(\bZ)\subset \mathsf{Ran}(\bV_r)$ then  $\mathfrak{F}[\bG_r]$
is an exact bitangential Hermite interpolant to  $\mathfrak{F}[\bG]$ at each 
interpolation point, $\left\{\sigma_1,\,\sigma_2,\,\ldots,\,\sigma_{n_r}\right\}$ 
in corresponding tangent directions,  
$\{\bb_1,\dots,\bb_{n_r}\}$ and $\{\bc_1,\dots,\bc_{n_r}\}.$ 
\end{corollary}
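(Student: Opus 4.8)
The plan is to show that the hypothesis $\Ran(\bZ)\subset\Ran(\bV_r)$ forces the residual factor $\bZ - \bV_r\bZ_r$, which appears in every error expression of Theorem~\ref{thm:appr_intp}, to vanish identically; exact bitangential Hermite interpolation of $\mathfrak{F}[\bG]$ by $\mathfrak{F}[\bG_r]$ then follows at once.

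First I would record the oblique--projector identity behind the hypothesis. Since $\bW_r^T\bV_r = \bI$, the matrix $\bV_r\bW_r^T$ is a projector with range $\Ran(\bV_r)$, so $\Ran(\bZ)\subset\Ran(\bV_r)$ is equivalent to $\bV_r\bW_r^T\bZ = \bZ$. With that in hand I would verify that $\widehat{\bZ}_r := \bW_r^T\bZ$ solves the Sylvester equation \eqref{Z_rDef} that defines $\bZ_r$: substituting $\widehat{\bZ}_r$ and using the Petrov--Galerkin relations $\bA_r = \bW_r^T\bA\bV_r$ and $\bB_r = \bW_r^T\bB$ together with $\bV_r\bW_r^T\bZ = \bZ$, the left-hand side of \eqref{Z_rDef} collapses to $\bW_r^T\!\left(\bA\bZ + \bZ\bA_w^T + \bB(\bC_w\bP_w + \bD_w\bB_w^T)\right)$, which is $\bzer$ by \eqref{eq:weight_sylv}.

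Next I would invoke uniqueness of the Sylvester solution. Because $\bfG_r$ is assumed stable, $\bA_r$ is a stable matrix, and $\bA_w$ is stable since $\bW\in\cHinf$ has its poles in the open left half plane; hence $\Lambda(\bA_r)\cap\Lambda(-\bA_w^T)=\emptyset$ and \eqref{Z_rDef} has a unique solution. Therefore $\bZ_r = \bW_r^T\bZ$, so that $\bV_r\bZ_r = \bV_r\bW_r^T\bZ = \bZ$ and $\bZ - \bV_r\bZ_r = \bzer$. As a byproduct, \eqref{eq:red_feedthr} then gives $\bD_r = \bzer$. Feeding $\bZ - \bV_r\bZ_r = \bzer$ into the three displayed identities of Theorem~\ref{thm:appr_intp} annihilates every term on each right-hand side, so $\mathfrak{F}[\bG](\sigma)\bb = \mathfrak{F}[\bG_r](\sigma)\bb$, $\bc^T\mathfrak{F}[\bG](\sigma) = \bc^T\mathfrak{F}[\bG_r](\sigma)$, and $\bc^T\mathfrak{F}'[\bG](\sigma)\bb = \bc^T\mathfrak{F}'[\bG_r](\sigma)\bb$ at every interpolation point $\sigma$ with its tangent directions $\bb,\bc$ --- that is, bitangential Hermite interpolation.

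The only technical caveat is the standing requirement $\sigma\notin\Lambda(\bA)\cup\Lambda(\bA_r)$ needed to apply Theorem~\ref{thm:appr_intp}; this holds automatically for the interpolation points of interest, which lie in the open right half plane (being reflections of the poles of the stable $\bfG_r$) and therefore avoid the spectra of the stable matrices $\bA$ and $\bA_r$. I expect the Sylvester--uniqueness step to be the only place where the stability hypothesis is genuinely used, so it is the natural --- if short --- crux of the argument; everything else is bookkeeping with the realization of $\mathfrak{F}$ supplied by Lemma~\ref{lem:conn_lcs_F} and the error formulas already proved.
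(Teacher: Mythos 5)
Your proposal is correct and takes essentially the same route as the paper: both rest on the identity $\bV_r\bW_r^T\bZ=\bZ$ and on uniqueness of the solution of the Sylvester equation \eqref{Z_rDef} (the paper subtracts the two Sylvester equations to get a homogeneous equation for $\bW_r^T\bZ-\bZ_r$, while you verify directly that $\bW_r^T\bZ$ satisfies \eqref{Z_rDef}; these are the same argument). The remaining step --- that $\bZ-\bV_r\bZ_r=\bzer$ annihilates every right-hand side in Theorem \ref{thm:appr_intp} --- is exactly how the corollary is meant to be read.
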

\begin{proof}:
Note first that under the hypotheses, $\bV_r\bW_r^T\bZ=\bZ$, 
Now, premultiply (\ref{eq:weight_sylv}) by $\bW_r^T$ and subtract (\ref{Z_rDef}) to obtain
$$
  \bA_r  \bW_r^T\left(\bZ-\bV_r\bZ_r\right) + \bW_r^T\left(\bZ-\bV_r\bZ_r\right) \bA_w^T  =\bzer. 
$$
Since $\bA_r$ and $\bA_w$ are both stable, 
$$
\bW_r^T\left(\bZ-\bV_r\bZ_r\right) = \bW_r^T\bZ-\bZ_r=\bzer
$$
and so, $\bZ=\bV_r\bZ_r$.
\end{proof}

The deviation from exact interpolation 
is quantified in Theorem  \ref{thm:appr_intp} and depends on the
deviation of $\bV_r\bZ_r$ from $\bZ.$ 
For shaping filters of modest order with $n_w \ll n,$ exact interpolation can be
induced since one may include $\Ran(\bZ)$ 
in the projection space,  $\Ran(\bV_r).$  

More generally, $\bV_r\bZ_r$ may be viewed as a Petrov-Galerkin approximation to the solution $\bZ$ of the Sylvester equation \eqref{eq:weight_sylv} in the following sense: $\bZ_r$ that solves (\ref{Z_rDef}) is a solution to the problem of finding 
$\cbfZ\in \mathbb{R}^{n_r\times n_w}$ such that with respect to the usual (Euclidean) inner product in $\mathbb{R}^n$,
\begin{equation*}
\begin{split}
\Ran&\left( \bA  \left(\bV_r \cbfZ\right) + \left(\bV_r \cbfZ\right) \bA_w^T + \bB (\bC_w\bP_w+\bD_w\bB_w^T)\right)\\
& \qquad  \perp \Ran\left( \bW_r \right).
\end{split}
\end{equation*}
Since  $m,m_w \ll n$, the singular values of the original solution, $\bZ$, to
\eqref{eq:weight_sylv} will typically decay rapidly
\cite{Gra04,Pen00,Sim13,SorZ02}; there will be good
 low rank approximations to $\bZ$ and among them will be approximations of the
form $\bV_r \cbfZ$.  In our approach, the subspace $\bV_r$ is closely related to a $\cHtwo$ optimal approximation. And in the unweighted case,
projection subspaces associated with $\mathcal{H}_2$-optimal reduced models are
known to yield very accurate approximations 
This has been underlined in \cite{DruKS11,FlaG13} by the fact that the
approximations are equivalent to those obtained from the alternating directions
implicit (ADI) iteration. Moreover,  \cite{BenB14} showed that for
symmetric state space systems, low rank approximations from an
$\mathcal{H}_2$-optimal reduced model in fact locally minimize the energy norm
naturally induced by the corresponding Lyapunov operator. Overall, this leads
to the expectation that as $n_r$ increases, $\bV_r\bZ_r \approx \bZ$. If
furthermore, the interpolation 
 points that determine a reduced model coincide with the reflected poles of the
model,  then Theorem  \ref{thm:appr_intp}
 asserts that the optimality conditions
  \eqref{rightInterpCond}-\eqref{IRFInterpCond} will very nearly be satisfied; 
  the reduced model draws closer to $\cHtwoWmed $-optimality as $n_r$ increases.   
  
 The practical difficulty in constructing such \emph{near optimal} reduced models is that one doesn't know \emph{a priori} how 
 to choose interpolation data determining a reduced model so as to coincide with the reflected poles of the model.  
 The parallel circumstance for (unweighted) optimal $\cHtwo$ model reduction has been largely resolved with 
 an iterative correction process \cite{GugAB08};  we propose an analogous approach here:
\begin{table}[h]  
 \fbox{
  \begin{minipage}[t]{3in}
  \!\textbf{Algorithm} \textsc{{nowi}}:\\
  \centerline{\textbf{Nearly Optimal Weighted Interpolation}}\\
   \vspace*{-.3cm}
  \begin{algorithmic}[1]
    \REQUIRE Interpolation points: $\{\sigma_1,\dots,\sigma_{n_r}\};$\\
     Tangent directions: \ $\tilde{\bB} = \left[\bb_1,\dots,\bb_{n_r}\right]$ \\
     \qquad\qquad\qquad\quad and $\tilde{\bC}
=\left[\bc_1,\dots,\bc_{n_r}\right].$
    \ENSURE $\bA_r$, $\bB_r$, $\bC_r$, $\bD_r$   \\
    \WHILE{relative change in $\{\sigma_i\}> \mathrm{tol}$}
    \STATE {\small \hspace{-2ex}Compute $\bV_r$ and $\bW_r$ from \eqref{eq:rrs},  \eqref{eq:lrs}, and \eqref{RanVrRanWr}.
}
    \STATE {\footnotesize \hspace{-3ex}  Update ROM: $\bA_r = \bW_r^T \bA \bV_r,\bB_r = \bW_r^T\bB,$\\
     $\bC_r=\bC\bV_r$, and $\bD_r$ as in \eqref{eq:red_feedthr}}.
    \STATE {\footnotesize \hspace{-3ex} $\sigma_i=-\lambda_i
\left(\mathbf{\Lambda}\right),\bA_r=\bR \mathbf{\Lambda} \bR^{-1},\tilde{\bB}=\bB_r^T\bR^{-T}$, \\
and  $\tilde{\bC}=\bC_r \bR.$ }
    \ENDWHILE\\    
  \end{algorithmic}
\end{minipage}
} 
\end{table}
%

Note that \textsc{nowi} is not simply a MIMO extension of \textsc{wirka} in \cite{Ani13}, which was developed specifically for SISO settings. \textsc{wirka} is heuristic in nature and does not originate from necessary optimality conditions.  On the other hand,  \textsc{nowi} directly attempts to satisfy conditions for optimality and will provide progressively better approximations to them as $n_r$ increases. 
Even in SISO settings, the difference between \textsc{nowi} and \textsc{wirka} is easily seen by noting that the model reduction bases $\bfV_r$ and $\bfW_r$ are completely different.  While \textsc{nowi} uses a state-space realization of $\mathfrak{F}[\bG](s)$  (as the interpolation conditions require) in order to construct $\bfV_r$ and $\bfW_r$, \textsc{wirka} instead uses regular rational Krylov subspaces corresponding to $\bG(s)$ -- generally, not even approximately satisfying the necessary optimality conditions. Moreover, in \textsc{wirka}, $\bfW_r$ is kept constant after initialization unlike in \textsc{nowi} where  both $\bfW_r$  and $\bfV_r$ are updated iteratively.

\paragraph*{Computational complexity:}
Many issues enter in determining the computational resources necessary to produce an effective reduced order model. 
Estimates of computational complexity serve as a useful proxy for this expense, which may be then further refined according to problem-specific structure and implementation. 
Notice first that our \textsc{\textsc{nowi}} Algorithm 
is an iterative process, requiring in each cycle the construction of left- and right- reduction subspaces.
This requires first the solution of two linear matrix equations,
\eqref{eq:weight_lyap} and \eqref{eq:weight_sylv} of orders $n_w\times n_w$ and $n\times n_w$, respectively. 
If $n_w\ll n$, this may be done directly with cost dominated by 
$n_w$ linear solves of dimension $n.$  For larger $n_w$, 
the numerical rank of $\bP_w$ and $\bZ$ is often relatively small allowing for very
accurate approximations by low rank methods such as 
\cite{penzl200clr,gugercin2003amodified,sabino2007solution,benner2008numerical,jaimoukha1994krylov,simoncini2008new}. 
Bases for the left- and right- reduction subspaces then may be computed exploiting 
the block triangular structure of the $\mathfrak{F}$-realization; 
this leads to $2n_r$ linear solves of dimension $n$ and $n_r$ 
linear solves of dimension $n_w$.  Sparsity in $\bA$ and $\bA_w$ may 
be exploited with either direct or iterative linear solvers.  
Multiple right-hand sides and small changes among shifts offer further 
opportunities for efficiency from subspace and preconditioner recycling. 

When compared to standard
approaches for frequency-weighted balanced truncation (\textsc{\textsc{fwbt}}), we find that as
long as the number of iterations of \textsc{\textsc{nowi}} 
remains modest (which appears typical),  the overhead associated with 
solving two large Lyapunov equations of dimension
$n$, which is necessary for \textsc{\textsc{fwbt}}, has been eliminated.   
This creates a particularly dramatic advantage for \textsc{\textsc{nowi}} in the case of a shaping filter where $n_w\ll n.$  
The computational advantages of  
\textsc{\textsc{nowi}} are also significant when 
 compared to Halevi's approach to weighted-$\mathcal{H}_2$ model reduction
\cite{Hal92}, which requires solving large-scale Riccati and Lyapunov equations
of order $(n+n_w) \times (n+n_w)$ at every step of the iteration.

\section{Numerical examples}
\label{sec:num}
\vspace*{-.3cm}
We study the performance of our \textsc{\textsc{nowi}} Algorithm 
for three different
examples resulting from controller reduction. We  compare the proposed method
with frequency weighted balanced truncation (\textsc{fwbt}) of \cite{Enn84}, and
also with \textsc{wirka} of \cite{Ani13} for the SISO example. 

\newlength\fheight
\newlength\fwidth
\setlength\fheight{0.57\linewidth}
\setlength\fwidth{0.76\linewidth}

\paragraph*{Los Angeles University Hospital}
The plant is a
linearized model for the Los Angeles University Hospital with order
$n=48.$   An LQG-based controller of the same order as the
original system is to be reduced, leading to a weighting $W(s)$ of order
$n_w=96$, see \cite{Ani13}. For a given $n_r$, we
use the mirror images of  the $\nu=2$ most dominant  poles of  $W(s)$
and the mirror images $n_r - \nu$ most dominant   poles of  $G(s)$ 
 as  the initial interpolation  points for  \textsc{wirka}, 
 as suggested in \cite{Ani13}. 
We use the same initialization for the \textsc{\textsc{nowi}} Algorithm. 
\begin{figure}[tb]
\begin{center}
 \includegraphics[scale=1.0]{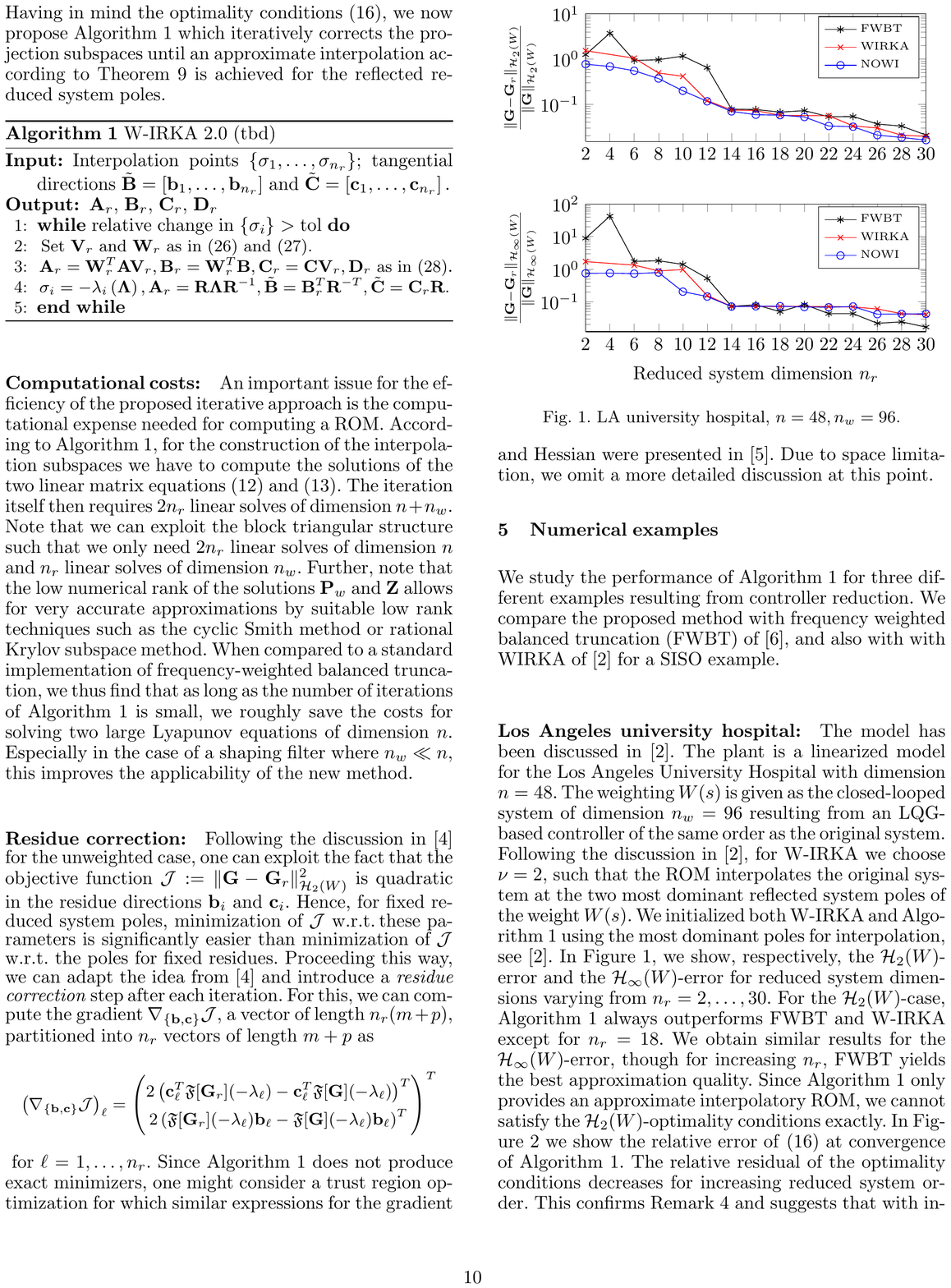}
  \caption{LA university hospital, $n=48,n_w=96.$}
  \label{fig:la_uh_a}
  \end{center}
\end{figure}
Figure \ref{fig:la_uh_a} shows the relative $\cHtwoWmed$- and $\mathcal{H}_{\infty}(W)$-errors obtained from
\textsc{nowi}, \textsc{fwbt}, and \textsc{wirka}
for  reduced system orders $n_r=2,\dots,30$. For the $\cHtwoWmed$-case, \textsc{\textsc{nowi}}  outperforms \textsc{fwbt} and \textsc{wirka} for all $n_r$ values
except for $n_r=18$, for which \textsc{wirka} is slightly better. The
superiority of \textsc{nowi} is especially evident for smaller  $n_r$ values. 
We find similar results for the
$\mathcal{H}_{\infty}(W)$-error as well; \textsc{fwbt} yields the smallest $\mathcal{H}_{\infty}(W)$-errors  for larger $n_r$, as expected. 
The fact that \textsc{nowi} displays better ${\mathcal H}_{\infty}(W)$ performance than \textsc{fwbt} even for  a subset of reduction orders 
suggests the effectiveness of the approach.
\textsc{nowi} produces reduced models that satisfy the $\cHtwoWmed$-optimality interpolation conditions (\ref{eq:weight_intp_cond}) only approximately (see Theorem \ref{thm:appr_intp}).
Figure \ref{fig:la_uh_b} shows how the relative interpolation error (deviation from \eqref{eq:weight_intp_cond}) in final reduced models produced by \textsc{\textsc{nowi}}  
evolves with increasing $n_r$.  As the figure shows, the
relative error in the optimality conditions
 decreases as $n_r$ increases. This
confirms the expectations described in the discussion following Corollary \ref{cor:exactness}.
\begin{figure}[tb]
\begin{center}
  \includegraphics[scale=1.0]{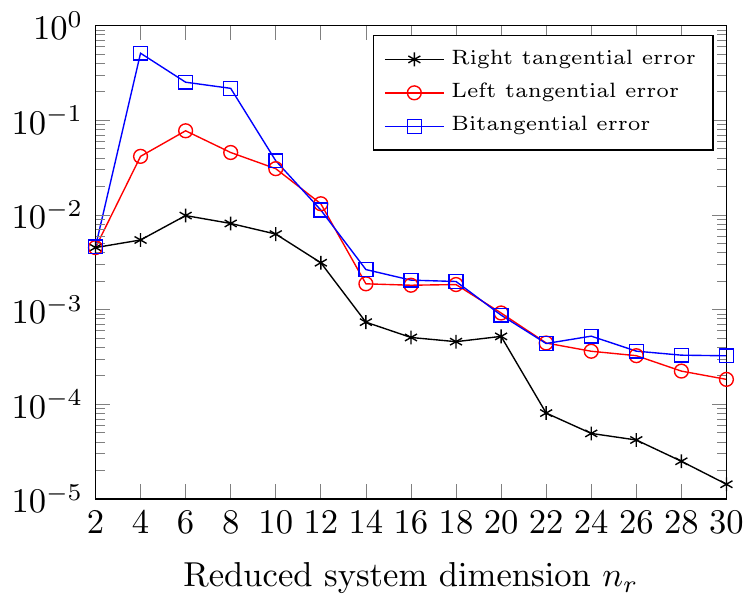}
  \caption{LA university hospital, $n=48,n_w=96.$}
  \label{fig:la_uh_b}
  \end{center}
\end{figure}
Figure \ref{fig:building_b} shows how the relative interpolation error in the 
the optimality conditions \eqref{eq:weight_intp_cond} evolve (for  fixed reduction order, $n_r$) step to step in the \textsc{\textsc{nowi}} Algorithm. 
Results for two cases are displayed: $n_r=16$
and $n_r=30$. In both cases, we observe that \textsc{\textsc{nowi}} 
rapidly reduces  interpolation error during the iteration. For example, for  $n_r=16$,  relative interpolation errors are in the order of $1$ initially; however as the algorithm progresses, relative errors decline to levels of $10^{-3}$,  leading to near-optimal interpolation.
\begin{figure}[tb]
\begin{center}
  \includegraphics[scale=1.0]{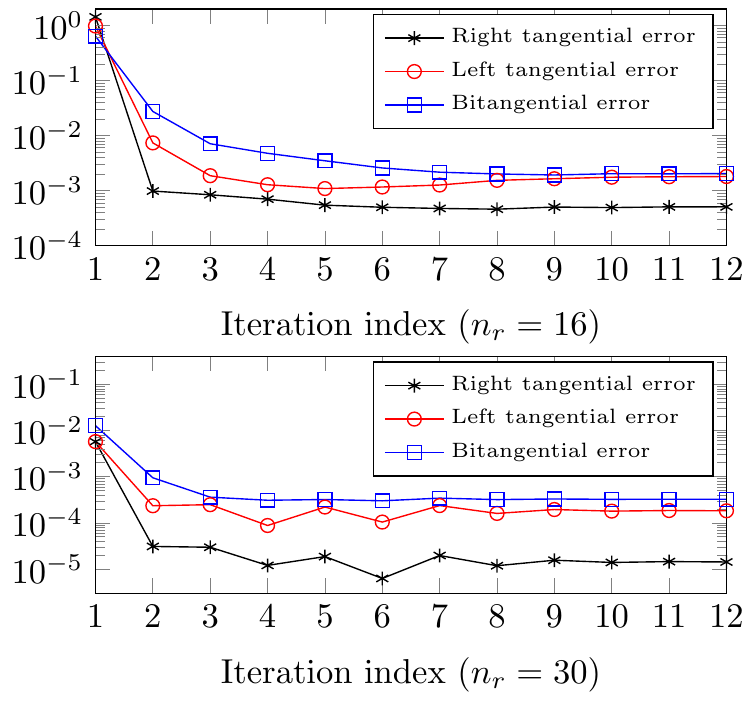}
  \caption{LA university hospital, $n=48,n_w=96.$}
  \label{fig:building_b}
  \end{center}
\end{figure}

\paragraph*{CD player}
The plant is a model for a CD player and belongs to the
\textsc{slicot} benchmark
collection. We consider the original MIMO version with $n=120$ and $m=p=2.$ As
in the previous example, we design an LQG-based controller having the same order as the plant,
leading to a weight $\bW(s)$  with $n_w=240.$
Since \textsc{wirka} has been proposed only for SISO systems and a MIMO extension is not immediate, we
 show comparisons only between \textsc{fwbt} and \textsc{nowi}, using a random initialization. Figure
\ref{fig:cd_a} again compares the quality of reduction in terms of the $\cHtwoWmed$-error and ${\mathcal H}_{\infty}(W)$-error. Both methods perform
equally well with slight advantages for \textsc{\textsc{nowi}} 
 in the case of the $\cHtwoWmed$-error and for \textsc{fwbt} in the case of the ${\mathcal
H}_{\infty}$-error.
\begin{figure}[tb]
\begin{center}
  \includegraphics[scale=1.0]{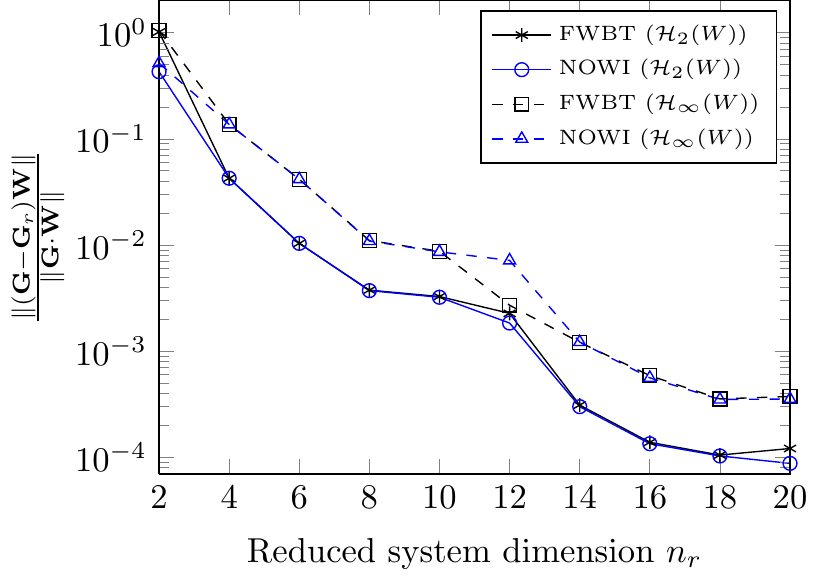}
  \caption{CD player, $n=120,n_w=240.$}
  \label{fig:cd_a}
  \end{center}
\end{figure}
Similar to the previous example,
Figure \ref{fig:cd_interror_vs_r} shows how the relative error in the optimal interpolation conditions
\eqref{eq:weight_intp_cond} vary as $n_r$ varies. 
Once again, the relative residual of the optimality
conditions decreases as $n_r$ increases, yielding near-optimal interpolation.
\begin{figure}[tb]
\begin{center}
  \includegraphics[scale=1.0]{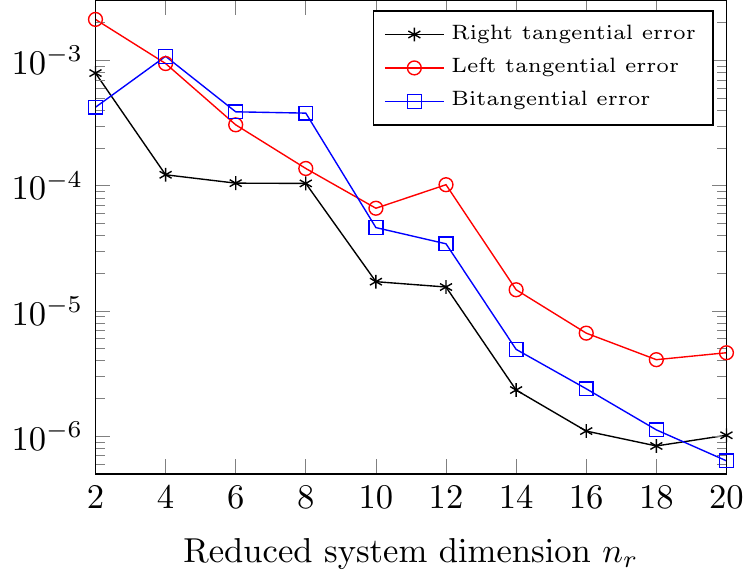}
  \caption{CD player, $n=120,n_w=240.$}
  \label{fig:cd_interror_vs_r}
  \end{center}
\end{figure}

\paragraph*{ISS 1R Module}
The final example is the component 1r of the International Space
Station from
the \textsc{slicot} benchmark collection. The plant is a MIMO system 
with $n=270$, and $m=p=3.$ The controller to be reduced is 
  an LQG-based controller as before. We compare \textsc{nowi} and \textsc{fwbt} 
  for $n_r= 2,4,\ldots,40$. 
 For  $n_r \le 30$, we use logarithmically spaced interpolation points for
initializing \textsc{nowi}. For larger values of $n_r$, we aggregate the optimal points  from
smaller reduced models. 
The  relative $\cHtwoWmed$ errors are shown in Figure
\ref{fig:iss_a}.
The full model is  hard to reduce with slowly decaying Hankel singular values. This is apparent from 
  Figure
\ref{fig:iss_a} where \textsc{fwbt} hardly reduces the error  for smaller $n_r$ values. 
The proposed method clearly  outperforms \textsc{fwbt} for every reduction order.
\begin{figure}[tb]
\begin{center}
  \includegraphics[scale=1.0]{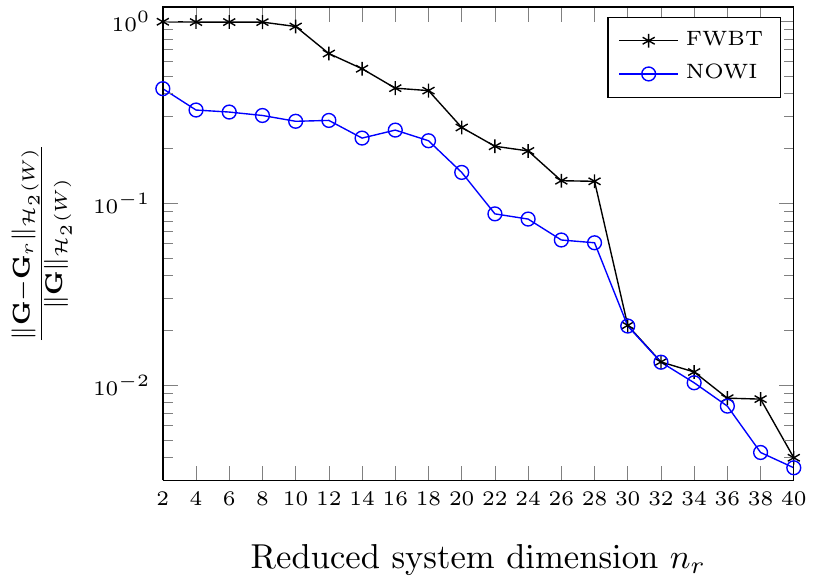}
  \caption{ISS, $n=270,n_w=540.$}
  \label{fig:iss_a}
  \end{center}
\end{figure}

\paragraph*{1-D Beam Model} The full-order model represents the dynamics of a
1-D beam with order $n=3000$ with two inputs  (point forces applied to the
first two states) and one output (the displacement in the middle). The sigma
plot, i.e. $\| \bG(\imath \omega) \|_2$ vs $\omega \in \IR$ is given in Figure
\ref{fig:beam_a}. For the weighting function $\bW(s)$, first we construct an
order $n_w = 60$, two-inputs/two-outputs band-pass filter with $[10^{-3},0.7]$
rad/sec frequency band of interest to focus the emphasis on the first three
peaks in the sigma plot. Using both \textsc{nowi} and \textsc{fwbt}, we reduce
the order to $n_r=16$. \textsc{nowi} was initiated by a random selection of
interpolation points and tangent directions as before. As the Figure
\ref{fig:beam_b} depicts, \textsc{nowi} significantly outperforms
\textsc{fwbt}, successfully achieving high accuracy within the frequency interval of
interest. We repeat the process a using band-pass filter with $[3 \times
10^{-2},0.7]$ rad/sec frequency band of interest. As Figures \ref{fig:beam_c}
and \ref{fig:beam_d} depict, \textsc{nowi} outperforms
\textsc{fwbt} in this case as well. In order to achieve this accuracy, \textsc{nowi} took only $3.34$ seconds to run, while \textsc{fwbt} already took more than $277$
seconds just to solve for the weighted Gramians.
\begin{figure}[tb]
\begin{center}
   \includegraphics[scale=1.0]{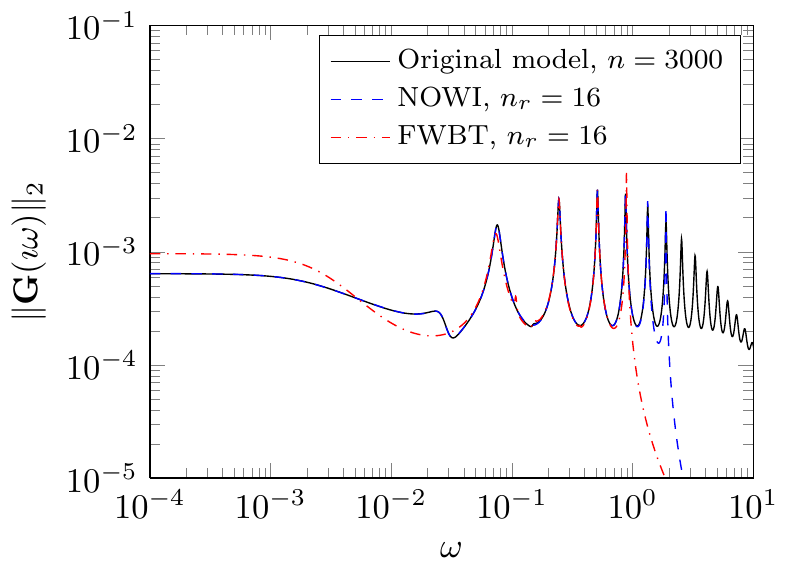}
  \caption{Beam, $n=3000,n_w=60.$}
  \label{fig:beam_a}
  \end{center}
\end{figure}
\begin{figure}[tb]
\begin{center}
   \includegraphics[scale=1.0]{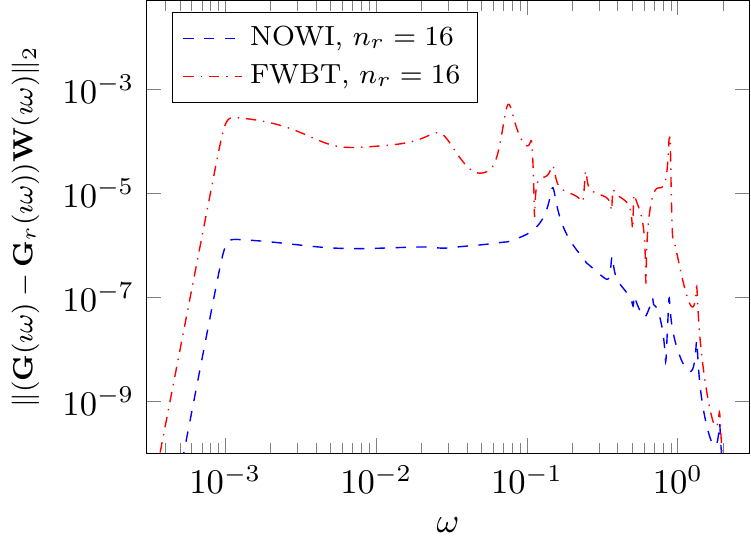}
  \caption{Beam, $n=3000,n_w=60.$}
  \label{fig:beam_b}
  \end{center}
\end{figure}
\begin{figure}[tb]
\begin{center}
   \includegraphics[scale=1.0]{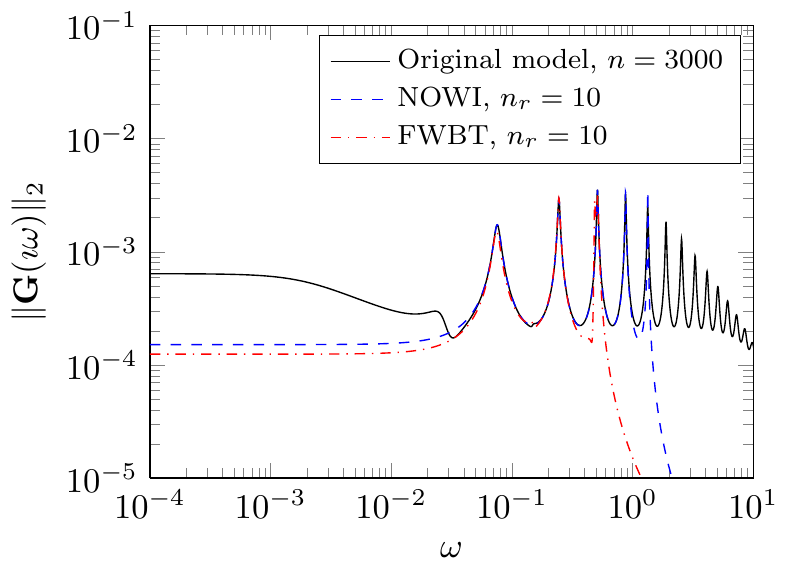}
  \caption{Beam, $n=3000,n_w=60.$}
  \label{fig:beam_c}
  \end{center}
\end{figure}
\begin{figure}[tb]
\begin{center}
   \includegraphics[scale=1.0]{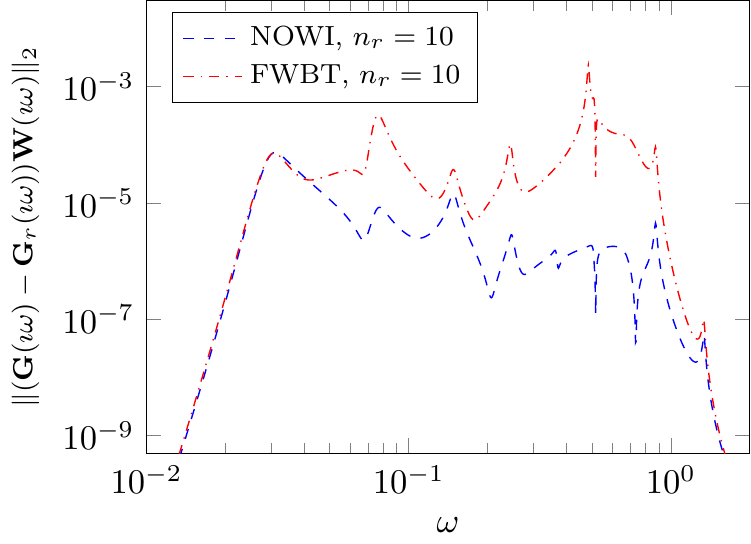}
  \caption{Beam, $n=3000,n_w=60.$}
  \label{fig:beam_d}
  \end{center}
\end{figure}
\section{Conclusions}
\label{sec:conc}
We have extended an interpolatory framework for weighted-$\cHtwo$ model
reduction to include MIMO dynamical systems with feed-forward terms.   The main
tool was a new representation of the weighted-$\cHtwo$ inner product in MIMO
settings (the $\mathfrak{F}$-transformation defined in (\ref{Fmap})) which led to  
associated first-order
necessary conditions that must be satisfied by an optimal 
weighted-$\cHtwo$ reduced-order model.  
These conditions in turn were found to be equivalent to necessary conditions
established earlier by Halevi. An examination of realizations for systems
defined by $\mathfrak{F}[\cdot]$ then led to an algorithm that remains tractable
for large state-space dimension. There are a variety of refinements of the ideas
presented here that can exploit the flexibility afforded by the interpolatory
model reduction framework. One direction that has been fruitful in the
unweighted case is trust-region based descent approaches, as described in
\cite{BeaG09} and extended to frequency-weighted settings in \cite{Bre13}.
We have presented here several numerical examples that illustrate the
effectiveness of our basic approach and its competitiveness with weighted balanced
truncation. 

\bibliographystyle{plain}        
\bibliography{references}           



\end{document}